\definecolor{darkblue}{rgb}{0, 0, 0.5}
\crefname{thm}{Theorem}{Theorems}
\crefname{assumption}{Assumption}{Assumptions}
\crefname{subassumptioni}{Assumption}{Assumptions}
\crefname{subappendix}{Appendix}{Appendices}
\Crefname{subappendix}{Appendix}{Appendices}
\newlist{subassumption}{enumerate}{1}
\setlist[subassumption,1]{
  label=(\roman*),   
  ref=\theassumption(\roman*), 
  leftmargin=2em
}
\newcommand{\E}{\operatorname{\mathbb{E}}}
\newcommand{\V}{\operatorname{\mathbb{V}}}
\newcommand{\supp}{\operatorname{\text{supp}}}
\newcommand{\R}{\mathbb{R}}
\newcommand{\B}{\mathbb{B}}
\newcommand{\N}{\mathbb{N}}
\newcommand{\G}{\mathbb{G}}
\newcommand{\X}{\mathcal{X}}
\newcommand{\Y}{\mathcal{Y}}
\newcommand{\F}{\mathscr{F}}
\newcommand{\weakto}{\rightsquigarrow}
\newcommand{\bydef}{:=}
\newtheorem{theorem}{Theorem}
\newtheorem{corollary}{Corollary}
\newtheorem{proposition}{Proposition}
\newtheorem{assumption}{Assumption}
\let\cref@OT@resetby\cref@resetby
\def\cref@resetby#1#2{%
  \let#2\relax%
  \cref@ifstreq{#1}{subassumptioni}{%
    \def#2{assumption}%
  }{%
    \cref@OT@resetby{#1}{#2}%
  }%
}
\newtheorem{remark}{Remark}
\title{\textbf{Inference in partially identified moment models\\ via regularized optimal transport}\footnote{
    We are grateful to Tim Armstrong, Yanqin Fan, Arie Kapteyn, Toru Kitagawa, Chen Qiu, seminar participants at the University of Copenhagen and New Economic School, and conference participants at the California Econometrics Conference 2025, Aarhus Workshop in Econometrics VII, and CEME Young Econometricians Conference 2025.
    The project described in this publication relies on data from surveys administered by the Understanding America Study (UAS), which is maintained by the Center for Economic and Social Research (CESR) at the University of Southern California. The project was supported by the National Institute on Aging of the National Institutes of Health and, in part, by the Social Security Administration under Award Number U01AG077280. The content is solely the responsibility of the authors and does not necessarily represent the official views of the National Institutes of Health, USC, or UAS.
}}
\date{\today}
\author[1]{\textsc{Grigory Franguridi}}
\author[2]{\textsc{Laura Liu}}
\affil[1]{Center for Economic and Social Research, University of Southern California  \vspace{1ex}}
\affil[2]{Department of Economics, University of Pittsburgh}
\begin{document}

\maketitle

\begin{abstract}
\linespread{1.2}

Many statistical and econometric problems involve parameters defined by moments of a joint distribution when only marginal distributions are observed, leading naturally to partial identification.
We develop a methodology for identification, estimation, and inference in the corresponding partially identified GMM model.
We characterize the sharp identified set for the parameter of interest via a support-function/optimal-transport (OT) representation.
To estimate the identified set, we employ entropic regularization, which yields a smooth approximation to the classical OT problem that can be computed efficiently using the Sinkhorn algorithm.
We also propose a test statistic for hypothesis testing and the construction of confidence regions for the identified set.
To derive its asymptotic distribution, we establish a novel central limit theorem for the entropic OT value under general smooth cost functions.
We then obtain valid critical values using the bootstrap for directionally differentiable functionals of \citet{fang2019inference}.
The resulting testing procedure controls size locally uniformly, including at parameter values on the boundary of the identified set.
We demonstrate good finite-sample performance of our methodology in Monte Carlo simulations.
Finally, as an empirical illustration, we estimate a panel logit model of self-reported happiness with attrition and refreshment, using data from the Understanding America Study.

\bigskip

\noindent \textbf{JEL Classification:} C14, C21, C23

\bigskip

\noindent \textbf{Keywords:} entropic optimal transport, partial identification, sharp identified set, moment condition, panel data, attrition

\end{abstract}

\onehalfspacing

\newpage

\section{Introduction}

Many quantities of interest in economics are not point-identified under realistic modeling choices. Two broad reasons can account for this: incomplete data, where relevant variables are only partially observed, and incomplete models, where economic theory does not pin down a unique data-generating mechanism. In such settings, the available data and modeling assumptions generally imply a set of parameter values rather than a unique point. The partial identification literature, beginning with the pioneering work of \citet{manski1990nonparametric}, studies what can be learned about economically relevant parameters when the available information is insufficient for point identification; see also the handbook by \citet{manski2003partial} and the survey by \citet{tamer2010partial}. 

The focus of this paper is a particularly important case in which parameters are defined by moment restrictions involving a joint distribution, but only the marginal distributions are observed.
Specifically, we consider a generalized method of moments (GMM) model where the parameter $\theta_0 \in \Theta \subset \R^k$ satisfies moment conditions
\begin{align}
    \E_{\pi_0}[\phi(X,Y,\theta_0)] = 0, \label{eq:gmm-intro}
\end{align}
but the joint distribution $\pi_0$ of $(X,Y)$ is only known to lie in the set of couplings $\Pi(\mu,\nu)$ with observable marginals $\mu$ and $\nu$. 
To characterize the sharp identified set, fix a candidate $\theta$ and consider the set
\[
\nu_{\Pi}(\theta) = \left\{\E_\pi[\phi(X,Y,\theta)] : \pi \in \Pi(\mu,\nu)\right\}
\]
of all moment values generated by couplings consistent with the observed marginals. Then $\theta$ is compatible with the data if and only if $0 \in \nu_{\Pi}(\theta)$, or equivalently,
\begin{align}
0 = \operatorname{dist}\left(0,\nu_{\Pi}(\theta)\right)
= \min_{\pi \in \Pi(\mu,\nu)} \left\| \E_{\pi}\left[\phi(X,Y,\theta)\right] \right\|
= \max_{\|u\|\le 1}\min_{\pi \in \Pi(\mu,\nu)} \E_{\pi}\left[u'\phi(X,Y,\theta)\right].
\label{eq:dist-intro}
\end{align}
The last equality follows by norm duality and Sion's minimax theorem: see \Cref{sec:model-and-sharp-identified-set} for details.
The inner minimization is an optimal transport (OT) problem with linear costs $u'\phi(x,y,\theta)$, producing the coupling that minimizes the moment violation in direction $u$. The outer maximization identifies the worst-case direction. Thus, $\theta$ belongs to the sharp identified set exactly when this max-min value equals zero.

As a toy example, consider a randomized controlled trial (RCT) with potential outcomes $Y(0) \sim \mu$ and $Y(1) \sim \nu$.
The share of units that benefit from treatment
\(
\theta = \mathbb P_\pi\left(Y(1) \ge Y(0)\right)
\)
is not point-identified since only the marginals $\mu,\nu$ of $\pi$ are observed. The moment function
\(
\phi(Y(0),Y(1),\theta)=\mathbf{1}\{Y(1) \ge Y(0)\} - \theta
\) is one-dimensional, and $\theta$ belongs to the sharp identified set $\Theta_{I,0}$ if and only if there exists a coupling $\pi \in \Pi(\mu,\nu)$ such that
\(
\E_{\pi}\left[\phi(Y(0),Y(1),\theta)\right] = 0. 
\)
Equivalently, $\Theta_{I,0} = [\underline{\theta}, \overline{\theta}]$, where $\underline{\theta} = \min_{\pi \in \Pi(\mu,\nu)} \mathbb P_\pi\left(Y(1) \ge Y(0)\right)$ and $\overline{\theta} = \max_{\pi \in \Pi(\mu,\nu)} \mathbb P_\pi\left(Y(1) \ge Y(0)\right)$. This logic also extends to vector-valued or implicitly defined parameters.

In this paper, we develop a complete methodology for identification, estimation, and inference in the OT-based partially identified GMM model \eqref{eq:gmm-intro}.

First, we characterize the sharp identified set for the parameter of interest using OT, as discussed above. This characterization has a geometric interpretation via support functions and informs an estimation procedure for the identified set.

Second, the implied estimation procedure involves solving an empirical version of the classical OT problem, which is known to be sensitive to sampling noise and to have a slow convergence rate and a nonstandard limit distribution. To overcome this challenge, we employ \emph{entropic regularization}, which penalizes the negative entropy of the joint distribution. Entropic OT has been widely used in statistics and machine learning since the seminal works by \citet{cuturi2013sinkhorn} and \citet{galichon2022cupid}, see the literature review below. This regularization yields a strictly convex problem that restores the usual $\sqrt{n}$-convergence and asymptotic normality, albeit at a cost of introducing a regularization bias. It is also computationally attractive, admitting fast implementation via the \emph{Sinkhorn algorithm} for the inner (entropic OT) problem and the projected gradient ascent for the outer problem in the max-min representation \eqref{eq:dist-intro}. 

Third, we develop a procedure to test hypotheses and construct confidence regions for the identified set. To this end, we establish a uniform central limit theorem (CLT) for the entropic OT value uniformly in direction $u\in \B$ and parameter $\theta\in\Theta$. 
We build on \citet{mena2019statistical} and \citet{goldfeld2024statistical} and extend their results to accommodate a broad class of smooth cost functions. To the best of our knowledge, we are the first to establish a CLT of such generality for the entropic OT value.

We then apply the functional delta method to the $\max$ functional to obtain the asymptotic distribution of our test statistic, similarly to \citet{franguridi2025set}. This distribution depends on the unknown argmax set and is not available in closed form. Moreover, the standard bootstrap may be invalid when the hypothesized parameter value is on the boundary of the identified set, which occurs when the argmax set is not a singleton. We resolve this issue by employing the bootstrap for directionally differentiable functionals of \citet{fang2019inference}. The resulting bootstrap-based test controls size locally uniformly and can be inverted to obtain a confidence region.


Finally, our estimator applies broadly to settings where one observes marginal distributions but not the joint distribution. In \Cref{sec:examples}, we discuss three examples: fixed effects panel logit with attrition and refreshment samples, nonparametric instrumental variables (IV) without a large support condition, and the Euler equation with repeated cross-sections. In the first example, we exploit the panel structure by fixing the joint distribution of retainers and solving OT only for attriters, which tightens the bounds for the common slope coefficient and average marginal effects (AME).

\paragraph{Related literature.}
First, our paper contributes to the extensive literature on partially identified models.
Among others, \citet{imbens2004confidence} and \citet{stoye2009more} study confidence sets for partially identified parameters with uniform coverage and optimality properties, \citet{beresteanu2008asymptotic} and \citet{bontemps2012set} develop asymptotic theory and geometric characterizations for partially identified models using random sets and support functions, and \citet{romano2010inference} provide general procedures for inference on identified sets defined via minimization of a criterion function. For an overview of the broader literature on partial identification and inference, we refer to the review by \citet{canay2017practical}. 
The closest antecedent is \citet{beresteanu2011sharp}, which characterizes sharp identification regions in models with convex moment predictions using the theory of random sets and support-function representations. Our characterization of the identified set relies on support functions but not on representations via random sets.

Second, there has been a growing literature that applied OT methods to economic problems. The paper closest to ours is \citet{fan2025partial}, which studies partial identification of a finite-dimensional parameter defined by a moment equality model with incomplete data via a classical OT representation of the identified set. In contrast, our paper employs entropic OT and goes beyond identification by developing the methodology for estimation and inference.
Our strategy for characterizing the identified set via an OT-based criterion function is similar to the one introduced by \citet{ekeland2010optimal,galichon2021inference}.
More broadly, OT has been deployed in a variety of economic applications, such as discrete choice models \citep{chiong2016duality}, covariate matching for causal effects \citep{gunsilius2021matching}, nonlinear difference-in-differences for multivariate counterfactuals \citep{torous2024optimal}, policy learning in matching markets \citep{hazard2025whom}, and combining stated and revealed preferences \citep{meango2025combining}.
Other works include \citet{voronin2025generalized}, which introduces a generalized version of OT for estimation in a large class of partially identified models, and \citet{schennach2025optimally}, which uses OT for estimation and inference in the overidentified GMM with measurement errors.

Third, we employ entropic regularization for the classical OT, which yields a strictly convex program that can be solved efficiently by the Sinkhorn algorithm.
Entropic OT is now widely used in high-dimensional statistics and machine learning.
In seminal works, \citet{cuturi2013sinkhorn} introduces Sinkhorn distances as a fast approximation to Wasserstein distances based on entropic regularization, and \citet{galichon2022cupid} show how entropic regularization of social surplus yields efficient algorithms for estimating matching models.
We also contribute to the statistical theory of the entropic OT and rely on two important prior works.
\citet{mena2019statistical} establish asymptotic normality of the entropic OT cost for quadratic cost functions, and \citet{goldfeld2024statistical} extend this analysis using empirical process theory to derive semiparametric efficiency bounds and bootstrap inference procedures. Our CLT extends these results by allowing for a broad class of smooth cost functions, which is essential when the cost itself encodes economically meaningful restrictions.


Finally, our analysis for the panel logit with attrition and refreshment is closely related to recent contributions on the analysis of panel data with nonignorable attrition and refreshment in \citet{franguridi2024estimation,franguridi2024robust,franguridi2024closed}. These papers develop computationally feasible, robust procedures for estimation and inference in this setting under separability assumptions on attrition that guarantee point identification. In contrast, we do not impose any assumptions on attrition, and hence our model is partially identified even when a refreshment sample is available. We then show how to characterize and estimate bounds on the common slope parameter and the AME \citep[see, e.g.][]{davezies2021identification}, as well as conduct inference in this setting. 

The remainder of the paper is organized as follows.
\Cref{sec:model} introduces the partially identified OT-based GMM model, and characterizes the sharp identified set for the parameter of interest.
\Cref{sec:estimation-and-inference} develops procedures for estimation and inference using entropic regularization.
\Cref{sec:examples} discusses several economic models that fit our general framework.
\Cref{sec:simulations} conducts a Monte Carlo simulation for the fixed effects panel logit with attrition and refreshment.
\Cref{sec:empirical} presents an empirical illustration of our methodology using the data from the Understanding America Study.
\Cref{sec:conclusion} concludes.
In the Online Appendix, \Cref{app:proofs} contains proofs of all the theoretical results and \Cref{app:logit-details} provides additional details for the fixed effects panel logit with attrition and refreshment.
\section{Setup and partial identification}\label{sec:model}

\subsection{Model and sharp identified set}\label{sec:model-and-sharp-identified-set}

Let $X$ and $Y$ be random vectors in $\R^d$ with the joint distribution $\pi_0$ identified only up to the class $\Pi(\mu,\nu)$ of joint distributions with marginals $\mu$ and $\nu$.
The true value $\theta_0$ of a parameter of interest $\theta\in\Theta \subset \R^k$ uniquely satisfies the set of moment conditions
\begin{align*}
  \E_{\pi_0}[\phi(X,Y,\theta_0)] = 0,
\end{align*}
where $\phi$ is a $p$-dimensional moment function.
Since $\pi_0$ is only partially identified, so is the parameter $\theta$ in general.
We allow for arbitrary $\dim(\phi)=p$ and $\dim(\theta)=k$ as long as the identified set is nonempty and compact. For $p>k$, the additional moments introduce extra directions to detect violations and may tighten the identified set.

For each fixed parameter value $\theta$, consider the set of \emph{moment predictions}
\[
\nu_{\Pi}(\theta) = \left\{\E_{\pi} \phi(X,Y,\theta): \pi\in\Pi \right\},
\]
i.e., the set of all moment vectors consistent with the identified set of distributions $\Pi = \Pi(\mu,\nu)$. Because expectation is linear and $\Pi$ is convex, $\nu_{\Pi}(\theta)$ is convex. The sharp identified set is then
\[
\Theta_{I,0} = \left\{\theta\in\Theta: 0\in\nu_{\Pi}(\theta)\right\}
= \left\{\theta\in\Theta: D_0(\theta)=0\right\},
\]
where $D_0(\theta)=d(0,\nu_{\Pi}(\theta))$ denotes the Euclidean distance from the origin to the set $\nu_{\Pi}(\theta)$. Identification is equivalent to the origin lying in the set of moment predictions \citep{beresteanu2011sharp}. 
Note that, although $\nu_{\Pi}(\theta)$ is convex for each $\theta$, the identified set $\Theta_{I,0}$ need not itself be convex.

Let $\B$ be the unit ball in $\R^p$. By norm duality and Sion's minimax theorem, the distance $D_0(\theta)$ admits the following max-min representation
\begin{align*}
D_0(\theta) = d\left(0,\nu_{\Pi}(\theta)\right) & = \min_{\pi\in\Pi} \left\|\E_{\pi}\phi(X,Y,\theta)\right\|_2 = \min_{\pi\in\Pi} \max_{u\in \B} \E_{\pi} \left[u'\phi(X,Y,\theta)\right]\\
&= \max_{u\in \B} \underbrace{\min_{\pi\in\Pi} \E_{\pi} \left[u'\phi(X,Y,\theta)\right]}_{\text{OT with cost }u'\phi}
=: \max_{u\in \B} c_{\theta}(u).
\end{align*}
The inner minimization over couplings $\pi$ is an OT problem whose cost is the directional moment $u'\phi$. The outer maximization selects the direction $u$ in which the model slackness is largest. Since $D_0(\theta) = \max_{u\in \B}c_{\theta}(u)$ and $\theta \in \Theta_{I,0}$ if and only if $D_0(\theta) = 0$, we have that $\theta$ belongs to the sharp identified set exactly when $\max_{u\in \B}c_{\theta}(u) = 0$, or equivalently, when $c_{\theta}(u) \le 0$ for all $u \in \B$. This max-min representation informs our estimation and inference procedure based on estimating $c_{\theta}(u)$ for $u\in \B$ and checking whether the maximum is close to zero: see \Cref{sec:estimation-and-inference}.

\begin{remark}[Geometric interpretation of $c_{\theta}(u)$ and $D_0(\theta)$]\label{rem:geometry}
Since $\nu_{\Pi}(\theta)$ is convex, $\theta\in\Theta_{I,0}$ (i.e., $0\in\nu_{\Pi}(\theta)$) holds if and only if no direction $u\in\B$ separates the origin from $\nu_{\Pi}(\theta)$, which is equivalent to
\[
\min_{\pi \in \Pi(\mu,\nu)} \E_{\pi}\left[u'\phi(X,Y,\theta)\right] \le 0,
\quad \forall\, u \in \B.
\]
In terms of the OT value function $c_{\theta}(u)$, this is exactly the condition $c_{\theta}(u) \le 0$ for all $u \in \B$. For a different use of separating hyperplanes for characterization of identified sets, see \citet{botosaru2024adversarial}.
Moreover, by convex duality, the negative distance admits the support function representation
\[
-D_0(\theta)
= \min_{u\in \B} \underbrace{\max_{\pi\in\Pi(\mu,\nu)}u'\E_{\pi}\phi(X,Y,\theta)}_{\text{support function of }\nu_{\Pi}(\theta)}
= \min_{u\in \B} \max_{\pi\in\Pi(\mu,\nu)} \E_{\pi}\left[u'\phi(X,Y,\theta)\right].
\]
These identities clarify the connection to models with convex moment predictions in \citet{beresteanu2011sharp}, while highlighting the key distinction that our direction-indexed inner problem defining $c_{\theta}(u)$ is infinite-dimensional.
\end{remark}

Note that our setup is not a standard moment inequality model because the inner minimization over $\pi$ depends on the direction $u$ and delivers a direction-dependent evaluation of the set of predicted moments rather than a fixed collection of inequalities. It is also different from intersection bounds that aggregate separate scalar constraints. 

To make the informal derivation above rigorous, we impose the following mild assumptions to ensure that the sharp identified set is nonempty and compact. Both of these properties are also crucial for the Hausdorff consistency of the associated estimator, see \Cref{prop:consistency}.

\begin{assumption}\label{a:id-set-characterization}
  \begin{subassumption}
    \item \label{a:param-compact} The parameter space $\Theta\subset \R^k$ is nonempty and compact.
    \item \label{a:supports-compact} The distributions $\mu$ and $\nu$ are concentrated on compact, convex sets $\X \subset \R^d$ and $\Y \subset \R^d$, respectively. 
    \item \label{a:correct-specification} The identified set $\Theta_{I,0}$ is nonempty, i.e., there exists $\theta_0 \in \Theta$ and $\pi_0 \in \Pi(\mu,\nu)$ such that $\E_{\pi_0}\left[\phi(X,Y,\theta_0)\right] = 0$.
    \item  \label{a:phi-continuous} For each $\theta\in\Theta$, the function $(x,y) \mapsto \phi(x,y,\theta)$ is continuous.
    \item \label{a:E-phi-continuous} For each $\pi\in \Pi(\mu,\nu)$, the function $\theta \mapsto \E_\pi [\phi(X,Y,\theta)]$ is continuous.
  \end{subassumption}
\end{assumption}
We emphasize that \Cref{a:supports-compact} allows the random vectors $X \sim \mu$ and $Y \sim \nu$ to have arbitrary distributions with bounded supports, including having both discrete and continuous components.
Also, \Cref{a:phi-continuous} can be relaxed to continuity of $\phi$ only in the continuous components of $X$ and $Y$; see the discussion after \Cref{a:UCLT}.

\begin{theorem}[characterization of identified set]\label{thm:id-set-characterization} Suppose \Cref{a:id-set-characterization} holds. Then the identified set $\Theta_{I,0}$ is nonempty and compact, and
  \(
  \Theta_{I,0} = \{\theta\in\Theta: \, D_0(\theta)=0\}.
  \)
\end{theorem}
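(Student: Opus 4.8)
The plan is to establish the two assertions—nonemptiness and compactness of $\Theta_{I,0}$, together with the distance characterization—by combining the elementary equivalence $0 \in \nu_\Pi(\theta) \iff D_0(\theta) = 0$ with a continuity/compactness argument for the map $\theta \mapsto D_0(\theta)$. Nonemptiness is essentially immediate from \Cref{as:correct-specification}: the assumed $(\theta_0,\pi_0)$ witnesses $0 \in \nu_\Pi(\theta_0)$, hence $D_0(\theta_0) = 0$ and $\theta_0 \in \Theta_{I,0}$. The set-equality $\Theta_{I,0} = \{\theta \in \Theta : D_0(\theta) = 0\}$ is true by definition of $D_0(\theta) = d(0,\nu_\Pi(\theta))$, using that $\nu_\Pi(\theta)$ is closed (which I would verify as a preliminary step, see below), so that the Euclidean distance to it vanishes exactly when the origin belongs to it.

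The substantive part is compactness. Since $\Theta$ is compact by \Cref{as:param-compact} and $\Theta_{I,0} \subset \Theta$, it suffices to show $\Theta_{I,0}$ is closed, for which I would prove that $\theta \mapsto D_0(\theta)$ is continuous (or at least lower semicontinuous, which already gives closedness of its zero set) on $\Theta$. First I would record that $\Pi(\mu,\nu)$ is compact in the weak topology: this is standard (e.g. via Prokhorov), using tightness inherited from the compact supports $\X,\Y$ in \Cref{as:supports-compact}. Next, for fixed $\theta$, continuity of $(x,y)\mapsto\phi(x,y,\theta)$ on the compact set $\X\times\Y$ (\Cref{as:phi-continuous}) makes it bounded, so $\pi \mapsto \E_\pi[\phi(X,Y,\theta)]$ is weak-continuous; its image $\nu_\Pi(\theta)$ is therefore a continuous image of a weakly compact set, hence compact (in particular closed and bounded), and convex as already noted—this justifies the preliminary claim above and also means the minimizing coupling in $D_0(\theta) = \min_{\pi\in\Pi}\|\E_\pi\phi\|_2$ is attained.

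For joint behavior in $\theta$, I would argue that $D_0$ is continuous via a Berge-type maximum-theorem argument applied to $(\theta,\pi) \mapsto \|\E_\pi[\phi(X,Y,\theta)]\|_2$ over the fixed compact constraint set $\Pi(\mu,\nu)$ (the correspondence is constant in $\theta$, so the usual upper/lower hemicontinuity hypotheses are trivial). This requires joint continuity of $(\theta,\pi)\mapsto \E_\pi[\phi(X,Y,\theta)]$: I would get it by combining \Cref{as:phi-continuous} and \Cref{as:E-phi-continuous} with a uniform-equicontinuity argument—continuity of $\phi$ on the compact set $\X\times\Y\times\Theta$ gives uniform continuity in $(x,y)$ locally uniformly in $\theta$, which upgrades the pointwise-in-$\pi$ continuity in $\theta$ to joint continuity; alternatively, one can bound $|\E_\pi[\phi(\cdot,\cdot,\theta)] - \E_{\pi'}[\phi(\cdot,\cdot,\theta')]|$ by a weak-convergence term (handled by \Cref{as:E-phi-continuous} after fixing $\theta'$) plus a uniform sup-norm term $\sup_{x,y}\|\phi(x,y,\theta)-\phi(x,y,\theta')\|$ (handled by uniform continuity of $\phi$). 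Given joint continuity, Berge's theorem yields continuity of the value function $D_0$, so $\Theta_{I,0} = D_0^{-1}(\{0\})$ is closed, hence compact.

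The main obstacle I anticipate is the passage from the two separate continuity hypotheses—continuity of $\phi$ in $(x,y)$ for fixed $\theta$, and continuity of $\E_\pi\phi$ in $\theta$ for fixed $\pi$—to the \emph{joint} continuity of $(\theta,\pi)\mapsto\E_\pi[\phi(X,Y,\theta)]$ needed for Berge's theorem; this is exactly where the compact-support assumption does real work, and the cleanest route is to first prove that $(x,y,\theta)\mapsto\phi(x,y,\theta)$ is in fact jointly continuous on the compact product $\X\times\Y\times\Theta$ (so uniformly continuous) and then show \Cref{as:E-phi-continuous} is in a sense redundant, or to keep the two hypotheses and patch them together via the triangle-inequality bound above. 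Everything else is routine measure-theoretic bookkeeping.
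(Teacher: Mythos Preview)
Your proposal is correct. Both you and the paper rely on weak compactness of $\Pi(\mu,\nu)$, compactness of the moment-prediction set $\nu_\Pi(\theta)$, and Berge's maximum theorem to obtain continuity of $D_0$ and hence closedness of $\Theta_{I,0}=D_0^{-1}(\{0\})$ inside the compact $\Theta$. The main difference lies in which form of $D_0$ each argument manipulates: you apply Berge once to $(\theta,\pi)\mapsto\|\E_\pi\phi\|_2$ over the constant correspondence $\Pi$, and you get the set-equality $\Theta_{I,0}=D_0^{-1}(\{0\})$ for free from closedness of $\nu_\Pi(\theta)$ and the elementary fact that distance to a closed set vanishes only at points of the set. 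The paper instead works with the max--min representation, applying Berge twice (first over $\pi$ to obtain continuity of the support function $(u,\theta)\mapsto\psi_{\nu_\Pi(\theta)}(u)$, then over $u$), and devotes a separate step to a separating-hyperplane argument showing that $0\in\nu_\Pi(\theta)$ is equivalent to $\min_{u\in\B}\psi_{\nu_\Pi(\theta)}(u)=0$. Your route is shorter for the theorem as stated; the paper's route has the side benefit of simultaneously certifying the support-function/max--min characterization that underlies the estimation and inference procedures in the remainder of the paper. Your explicit attention to the passage from the separate continuity hypotheses in \Cref{as:phi-continuous,as:E-phi-continuous} to \emph{joint} continuity of $(\theta,\pi)\mapsto\E_\pi\phi$ is also more careful than the paper's treatment, which simply asserts continuity of $(u,\theta,\pi)\mapsto u'\E_\pi\phi$ on the compact product without further comment.
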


\begin{proof}
  See Appendix \ref{app:id-set-characterization}.
\end{proof}

\subsection{Toy example}

To illustrate the characterization above, let us revisit the toy example in the introduction. Consider a RCT with potential outcomes $Y(0)\sim N(0,1)$ and $Y(1)\sim N(2,1)$. The parameter of interest is the share of units that benefit from treatment,
\[
\theta=\mathbb P_\pi\left(Y(1)\ge Y(0)\right)=\E_\pi\left[\mathbf{1}\{Y(1)\ge Y(0)\}\right].
\]
The corresponding scalar moment condition is
\(
\E_\pi\left[\phi\left(Y(0),Y(1),\theta\right)\right]=0,
\) where $\phi\left(Y(0),Y(1),\theta\right)=\mathbf{1}\{Y(1)\ge Y(0)\}-\theta$.\footnote{We use this example only to illustrate the geometry of the OT characterization. Our formal theoretical results impose continuity/smoothness conditions on the moment function, which rule out the indicator function on continuous supports.}
The identified set is then
\[
\Theta_{I,0} :=\left\{\theta\in\mathbb{R}:\ \max_{u\in[-1,1]} c_\theta(u)=0\right\},\quad\text{where } c_\theta(u)=\min_{\pi\in\Pi} \E_{\pi} \left[u\phi(Y(0),Y(1),\theta)\right].
\]
This characterization has a simple interpretation as a two-sided game. Since $\phi$ is scalar, $u$ simply flips the sign of the moment: $u=1$ tests whether $\E_\pi[\phi\left(Y(0),Y(1),\theta\right)]$ is positive under some $\pi\in\Pi$, while $u=-1$ tests the opposite. The adversary chooses the least favorable coupling under each sign. Thus,
\[
\max_{u\in\{-1,0,1\}} c_\theta(u) = \max\left\{0,\,\,\, \min_{\pi\in\Pi} \mathbb P_{\pi}(Y(1)\ge Y(0)) - \theta,\,\,\, \theta - \min_{\pi\in\Pi} \mathbb P_{\pi}(Y(1)\ge Y(0))\right\}
\]
is the worst of these two one-sided checks, and equals $0$ at $u=0$. If neither side can produce a positive value, then $\theta$ is in the identified set $\Theta_{I,0}$.

For Gaussian marginals with common variance $\sigma^2$ and means $\mu_1\ge \mu_0$, the classical sharp bounds of \citet{makarov1982estimates} yield
\[
1-2\Phi\left(\frac{-(\mu_1-\mu_0)}{2\sigma}\right) \le \mathbb{P}_\pi\left(Y(1)\ge Y(0)\right) \le\ 1,
\]
as also discussed in \citet{firpo2019partial}. With $\sigma=1$, $\mu_0=0$, and $\mu_1=2$, the identified set is
\(
0.69 \le \mathbb{P}_\pi\left(Y(1)\ge Y(0)\right) \le 1,
\)
which matches our characterization.

Figure \ref{fig:toy-example-population} plots $u\mapsto c_\theta(u)$ for $u\in[-1,1]$. The curve is piecewise linear, anchored at 
$u=1$ by the lowest feasible beneficiary share minus $\theta$ ($\min_{\pi\in\Pi} \mathbb P_{\pi}(Y(1)\ge Y(0))-\theta=0.69-\theta$), and at $u=-1$ by $\theta$ minus the highest feasible beneficiary share ($\theta-\max_{\pi\in\Pi} \mathbb P_{\pi}(Y(1)\ge Y(0))=\theta-1$). The identification check reduces to verifying whether this curve stays weakly below zero. For $\theta<0.69$, the right endpoint is above zero, whereas for $\theta\in[0.69,1]$, the whole curve remains nonpositive, and hence $\Theta_{I,0}=[0.69,1]$.

\begin{figure}[t]
  \centering
  \includegraphics[width=0.6\textwidth]{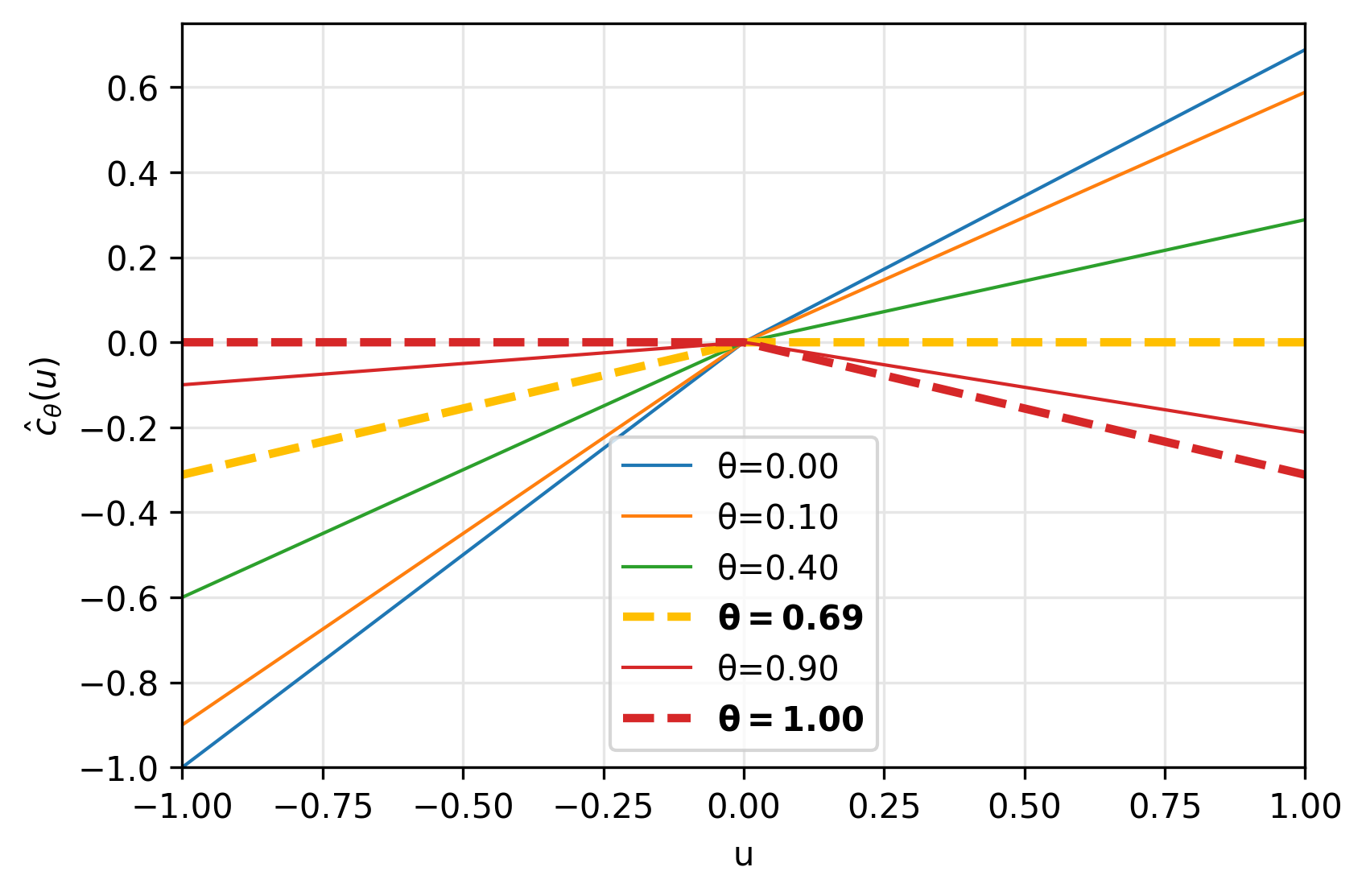}
  \caption{Population cost function $u\mapsto c_\theta(u)$ for the RCT toy example}
  \label{fig:toy-example-population}
\end{figure}

\section{Estimation and inference}\label{sec:estimation-and-inference}

Suppose now that we have access to random samples $X_1,\dots, X_n$ and $Y_1,\dots, Y_m$ from distributions $\mu$ and $\nu$, respectively.
For simplicity, we let $n=m$ and assume that the two samples are independent, although these assumptions can be relaxed at the expense of heavier notation.
The goal of this section is to describe our estimator of the sharp identified set $\Theta_I$ and a testing procedure for the hypothesis $H_0:\theta=\theta_0$.

\subsection{Estimation}

The characterization of the sharp identified set in \Cref{thm:id-set-characterization} directly informs an estimation procedure.
Denote by $\hat\mu$, $\hat\nu$ the empirical distributions based on samples $(X_i)$ and $(Y_j)$, and let $\hat\Pi = \Pi(\hat\mu,\hat\nu)$ be the set of joint distributions with marginals $\hat\mu$ and $\hat\nu$.
The sample analog of $c_{\theta,0}(u)$ is then
\begin{align*}
  \hat c_{\theta,0}(u) = \min_{\pi\in \hat \Pi} \E_\pi \left[u'\phi(X,Y,\theta)\right].
\end{align*}
This is the value of the empirical OT problem with the cost function $(x,y) \mapsto u'\phi(x,y,\theta)$.
This is an infinite-dimensional linear program that is known to be computationally challenging, sensitive to sampling noise, and having nonstandard convergence rates when the (effective) dimensions of $X$ and $Y$ are greater than $4$, see, e.g., \citet{cuturi2013sinkhorn,hundrieser2024unifying}.
We therefore suggest using \emph{entropic regularization} -- a classical technique for improving analytical and computational properties of OT that was introduced in \citet{cuturi2013sinkhorn} and a working paper version of \citet{galichon2022cupid}. This constitutes adding a term to the cost function that penalizes deviations from the independence distribution $\hat\mu\otimes\hat\nu$, viz.,
\begin{align*}
    \hat c_{\theta,\varepsilon}(u) = \min_{\pi\in \hat \Pi} \E_\pi \left[u'\phi(X,Y,\theta)\right]+ \varepsilon \cdot \operatorname{KL}(\pi\,||\,\hat\mu\otimes\hat\nu),
\end{align*}
where $\operatorname{KL}$ is the Kullback-Leibler divergence,
\begin{align*}
  \operatorname{KL}(\pi\,||\,\hat\mu\otimes\hat\nu) = \int \log \frac{d\pi}{d(\hat\mu\otimes\hat\nu)}(x,y) \, d\pi(x,y).
\end{align*}
The advantage of such regularization is that the program becomes strictly convex, much less sensitive to sampling noise, and regains standard asymptotics as we show in the next subsection.
Importantly, the regularized program can be solved using explicit iterations of the Sinkhorn algorithm, see \Cref{sec:implementation}.
Although regularization introduces (small) bias, explicit debiasing procedures are available in the literature, see, e.g., \citet{pooladian2022debiaser} and references therein.

The identified set of interest then becomes
\begin{align*}
  \Theta_{I,\varepsilon} = \left\{\theta\in\Theta:\, D_\varepsilon(\theta) = 0 \right\},
\end{align*}
where $\varepsilon>0$ is a penalty parameter and
\begin{align*}
  D_\varepsilon(\theta) = \max_{u\in \B} c_{\theta,\varepsilon}(u)
\end{align*}
is the population distance statistic.
Throughout the rest of the paper, we drop the subscript $\varepsilon$ for brevity.

We define our estimator of $\Theta_I$ as
\begin{align*}
    \widehat\Theta_I = \left\{\theta\in\Theta: \, \hat D(\theta) \le \eta_n \right\},
\end{align*}
where $\eta_n>0$ is a tuning parameter and the (sample) distance statistic is
\begin{align}
    \hat D(\theta) = \max_{u\in \B} \hat c_{\theta}(u). \label{eq:D-sample}
\end{align}

We impose the following assumptions.

\begin{assumption}\label{a:consistency}
  \begin{subassumption}
    \item \label{a:dominating-m} There exists a nondecreasing function $m: \R_+ \to \R_+$ such that $m(0)=0$, $m(\delta)>0$ for $\delta>0$ and $D(\theta) \ge m(d(\theta,\Theta_I))$ for all $\theta\in\Theta$.
    \item \label{a:distance-rate} $P(\|\hat D-D\|_\infty \le r_n) \to 1$ for a deterministic sequence $r_n \downarrow 0$.
    \item \label{a:eta} $\eta_n \downarrow 0$ and $r_n = o(\eta_n)$.
  \end{subassumption}
\end{assumption}

\Cref{a:dominating-m} imposes weak separation of the identified set by the criterion function. 
\Cref{thm:UCLT} below implies that \Cref{a:distance-rate} holds with $r_n = C n^{-1/2}$ for some constant $C$.
Finally, \Cref{a:eta} requires picking $\eta_n$ that converges to zero sufficiently slowly.

The following theorem establishes the convergence of our estimator to the sharp identified set in the Hausdorff distance $d_H$.

\begin{theorem}[consistency]\label{prop:consistency}
  Under \Cref{a:id-set-characterization,a:consistency}, we have
  \begin{align*}
    d_H(\widehat\Theta_I, \Theta_I) = o_p(1).
  \end{align*}
\end{theorem}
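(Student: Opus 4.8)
The plan is to show both directions of Hausdorff closeness separately, using the weak-separation condition \Cref{a:dominating-m} for the direction that is nontrivial. Write $d_H(\widehat\Theta_I,\Theta_I) = \max\{\sup_{\theta\in\widehat\Theta_I} d(\theta,\Theta_I),\ \sup_{\theta\in\Theta_I} d(\theta,\widehat\Theta_I)\}$. The first step is to work on the high-probability event $A_n = \{\|\hat D - D\|_\infty \le r_n\}$, which by \Cref{a:distance-rate} satisfies $P(A_n)\to 1$; it then suffices to bound each of the two one-sided distances by a deterministic $o(1)$ sequence on $A_n$.

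For the ``inclusion of truth'' direction, I would show $\Theta_I \subseteq \widehat\Theta_I$ on $A_n$. If $\theta\in\Theta_I$, then $D(\theta)=0$ by \Cref{thm:id-set-characterization}, so on $A_n$ we have $\hat D(\theta) \le D(\theta) + r_n = r_n < \eta_n$ for $n$ large, using $r_n = o(\eta_n)$ from \Cref{a:eta}; hence $\theta\in\widehat\Theta_I$. Therefore $\sup_{\theta\in\Theta_I} d(\theta,\widehat\Theta_I) = 0$ on $A_n$ eventually.

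For the ``no spurious points'' direction, I would bound $\sup_{\theta\in\widehat\Theta_I} d(\theta,\Theta_I)$. Fix $\theta\in\widehat\Theta_I$, so $\hat D(\theta)\le\eta_n$. On $A_n$, $D(\theta) \le \hat D(\theta) + r_n \le \eta_n + r_n$. By \Cref{a:dominating-m}, $m(d(\theta,\Theta_I)) \le D(\theta) \le \eta_n + r_n$. Since $m$ is nondecreasing with $m(0)=0$ and $m(\delta)>0$ for $\delta>0$, define its generalized inverse $m^{-1}(t) = \sup\{\delta\ge 0: m(\delta)\le t\}$; then $d(\theta,\Theta_I) \le m^{-1}(\eta_n + r_n)$. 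Here I would invoke \Cref{as:param-compact}: $\Theta$ is compact, so $d(\cdot,\Theta_I)$ is bounded on $\Theta$, which combined with $m(\delta)>0$ for all $\delta>0$ forces $m^{-1}(t)\downarrow 0$ as $t\downarrow 0$ (otherwise $m$ would vanish on an interval bounded away from $0$ up to the diameter of $\Theta$, contradicting the assumption — one should check this limit is well-defined on the bounded range of the distance). Since $\eta_n + r_n \downarrow 0$ by \Cref{a:eta}, we get $\sup_{\theta\in\widehat\Theta_I} d(\theta,\Theta_I) \le m^{-1}(\eta_n+r_n) \to 0$ on $A_n$.

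Combining the two bounds, on $A_n$ we have $d_H(\widehat\Theta_I,\Theta_I) \le m^{-1}(\eta_n + r_n) \to 0$, and since $P(A_n)\to 1$ this gives $d_H(\widehat\Theta_I,\Theta_I) = o_p(1)$. One subtlety worth flagging: the argument implicitly needs $\widehat\Theta_I$ and $\Theta_I$ to be nonempty (so that the sup's and distances are well-defined) — nonemptiness of $\Theta_I$ is \Cref{thm:id-set-characterization}, and nonemptiness of $\widehat\Theta_I$ follows on $A_n$ from the inclusion $\Theta_I\subseteq\widehat\Theta_I$ just established. The main obstacle is the step controlling $m^{-1}(\eta_n+r_n)\to 0$: one must argue carefully that the generalized inverse of $m$ is continuous at $0$ on the relevant bounded domain, which is where compactness of $\Theta$ (hence boundedness of $d(\cdot,\Theta_I)$) does real work rather than being a formality.
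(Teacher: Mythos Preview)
Your proof is correct and follows essentially the same two-step strategy as the paper: first show $\Theta_I \subseteq \widehat\Theta_I$ on the high-probability event $A_n$ using $r_n = o(\eta_n)$, then show $\widehat\Theta_I$ is contained in arbitrarily small enlargements of $\Theta_I$ using the separation function $m$. The only cosmetic difference is that the paper fixes $\delta>0$ and argues that eventually $\inf_{\theta\notin\Theta_I^\delta}\hat D(\theta) \ge m(\delta)-r_n > \eta_n$, whereas you package the same idea via the generalized inverse $m^{-1}$; these are equivalent formulations of the same argument.

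One small remark: your invocation of compactness of $\Theta$ to get $m^{-1}(t)\downarrow 0$ is unnecessary. The limit follows directly from monotonicity of $m$ together with $m(\delta)>0$ for all $\delta>0$: given any $\epsilon>0$, for $t<m(\epsilon)$ the set $\{\delta\ge 0: m(\delta)\le t\}$ is contained in $[0,\epsilon)$, so $m^{-1}(t)\le\epsilon$. Compactness is used in the paper (and implicitly in your argument) only to ensure $\Theta_I$ is nonempty and closed so that $d_H$ is well-defined, not for the inversion step you flagged.
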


\begin{proof}
  See \Cref{app:consistency}.
\end{proof}

\subsection{Inference}

Now our goal is to develop a test of the hypothesis $H_0:\theta=\theta_0$.
To this end, we use the rescaled distance $\sqrt{n}\cdot \hat D(\theta_0)$ as the test statistic.
We characterize its asymptotic behavior by first establishing a novel CLT for the regularized OT value $\hat c_{\theta_0}(u)$, uniformly in $u\in \B$ and $\theta_0\in\Theta$, and then applying the functional delta method to obtain the limiting distribution of our test statistic.
Since this distribution does not have a simple form, we employ a result in \citet{franguridi2025set} to establish the validity of the bootstrap for directionally differentiable functionals of \citet{fang2019inference}.

\begin{remark}[KS vs.\ CvM statistics]\label{rem:ks-cvm}
  Our (population) statistic
  \[
    D(\theta)=\max_{u\in \B} c_{\theta}(u)
  \]
  uses maximization that is characteristic of Kolmogorov-Smirnov (KS) type statistics in the moment inequality literature, see, e.g., \citet{andrews2013inference}.
  The KS-type statistics are powerful against local alternatives with few violations \citep{armstrong2015asymptotically,armstrong2018choice}, and provide diagnostics for the most binding inequality. 
  
  Alternatively, we could consider the Cram\'er-von Mises (CvM) type statistic
  \[
  \tilde D(\theta) = \int_{\B} c_\theta(u)_{+}^{2} \, d \omega(u),
  \]
  for a probability measure $\omega$ on $\B$ with full support.
  The CvM-type statistics are powerful against diffused local alternatives \citep{andrews2013inference}, retain power under weak identification \citep{bugni2010bootstrap}, typically exhibit smaller finite-sample size distortions, and are less sensitive to slack inequalities.
  While our methodology can be extended to the CvM type statistic, we focus on the KS type statistic in this paper since it is simple to implement and demonstrates good size and power performance in the Monte Carlo simulations. 

  Finally, yet another choice of a test statistic is a self-normalized moment violation statistic as in \citet{chetverikov2018adaptive}. We leave consideration of such a statistic for future work.
  \end{remark}

Our first result is the uniform CLT for the regularized OT value, for which we further need the following assumptions.

\begin{assumption}\label{a:UCLT}
  \begin{subassumption}
    \item \label{a:cost} For all $j=1,\dots,p$, the moment function $\phi_j: \X \times \Y \to \R$ is such that $\phi_j \in C^s(\X \times \Y)$ with $s > d/2$.
    \item \label{a:sampling} The estimators $\hat\mu,\hat\nu$ are empirical measures based on independent random samples from $\mu$ and $\nu$, respectively.
  \end{subassumption}
\end{assumption}
\Cref{a:cost} imposes smoothness of the moment functions $\phi_j$ on the entire set $\X\times\Y$ for simplicity; however, when some components of $X$ and $Y$ are (finitely) discrete and others are continuous, this assumption can be extended to allow the functions $\phi_j$ to be smooth only with respect to the continuous components.
For example, in the panel logit model, the moment function includes the indicator $\mathbf{1}\{Y_1+Y_2=1\}$ for binary $Y_t\in\{0,1\}$, which is covered by this extension.
\Cref{a:sampling} can be relaxed to accommodate dependence within samples, e.g., when the samples are stationary $\beta$-mixing processes, as well as dependence across samples, see Remark 11 in \citet{goldfeld2024statistical}.
Finally, \Cref{a:supports-compact} is needed to establish uniform bounds on the optimal potentials and their derivatives. At the expense of more complicated proofs, this assumption can be relaxed to bounds on the tails of $\mu,\nu$ related to the cost function, similar to \citet{goldfeld2024statistical}.
  
\begin{theorem}[Uniform CLT for regularized OT value]\label{thm:UCLT}
  Suppose that \Cref{a:param-compact,a:supports-compact,a:UCLT} holds.
  Then there exists a tight Gaussian process $\G$ on $C(\B \times \Theta)$ such that
    \begin{align*}
      \sqrt{n}\left( \hat c_{\theta}(u)-c_{\theta}(u) \right) \weakto \G(u,\theta) \text{ in } C(\B\times\Theta).
    \end{align*}  
  \end{theorem}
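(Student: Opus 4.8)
The plan is to reduce the uniform CLT to a Hadamard-differentiability argument applied to the empirical marginals, combined with uniform control of the entropic OT potentials and their dependence on $(u,\theta)$. The key object is the dual representation of the entropic OT value: for a cost $c(x,y)=u'\phi(x,y,\theta)$, one has
\begin{align*}
c_{\theta,\varepsilon}(u) = \sup_{f,g}\left\{\int f\,d\mu + \int g\,d\nu - \varepsilon \int \left(e^{(f(x)+g(y)-c(x,y))/\varepsilon}-1\right)d(\mu\otimes\nu)\right\},
\end{align*}
with a unique (up to the usual additive constant) maximizing pair $(f_{u,\theta},g_{u,\theta})$ characterized by the Sinkhorn fixed-point equations. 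First I would invoke \Cref{a:distributions,a:cost} to establish, following \citet{mena2019statistical} and \citet{goldfeld2024statistical}, uniform-in-$(u,\theta)$ bounds on the optimal potentials in a Hölder norm $C^s$ with $s>d/2$: since $u\in\B$ and $\phi_j\in C^s(\X\times\Y)$, the cost family $\{u'\phi(\cdot,\cdot,\theta)\}$ is bounded in $C^s$, and the Sinkhorn iteration propagates this regularity, giving a bound on $\|f_{u,\theta}\|_{C^s}+\|g_{u,\theta}\|_{C^s}$ uniform over $\B\times\Theta$. I would also show the map $(u,\theta)\mapsto(f_{u,\theta},g_{u,\theta})$ is continuous (indeed Lipschitz) into $C(\X)\times C(\Y)$, using the contraction property of the Sinkhorn operator and continuity of $\phi$ in $\theta$ from \Cref{as:phi-continuous}.

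Next I would differentiate. By the envelope theorem (Danskin), since the potentials are optimal, the value $c_{\theta,\varepsilon}(u)$ is a smooth functional of the pair $(\mu,\nu)$ with derivative given by integration against the optimal potentials: heuristically, $\delta c_{\theta,\varepsilon}(u) = \int f_{u,\theta}\,d(\delta\mu) + \int g_{u,\theta}\,d(\delta\nu)$ plus a second-order correction term that vanishes because of optimality. Making this rigorous requires a second-order expansion of the dual objective in $(\mu,\nu)$, controlled uniformly in $(u,\theta)$ using the $C^s$ bounds above; the strict concavity of the entropic dual (with modulus controlled by $\varepsilon$ and the uniform bound on the potentials) gives a quadratic remainder. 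Plugging in $\hat\mu-\mu$ and $\hat\nu-\nu$, I would write
\begin{align*}
\sqrt{n}\left(\hat c_{\theta}(u)-c_{\theta}(u)\right) = \sqrt{n}\int f_{u,\theta}\,d(\hat\mu-\mu) + \sqrt{n}\int g_{u,\theta}\,d(\hat\nu-\nu) + R_n(u,\theta),
\end{align*}
and argue $\sup_{u,\theta}|R_n(u,\theta)| = o_p(1)$, the remainder being $O_p(\sqrt{n}(\|\hat\mu-\mu\|^2+\|\hat\nu-\nu\|^2))$ in the relevant dual norm, which is $o_p(1)$ precisely because $s>d/2$ makes the $C^s$-ball a $\mu$- and $\nu$-Donsker class with small enough metric entropy to also kill the quadratic term.

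The leading term is an empirical process indexed by the function classes $\mathcal{F}=\{f_{u,\theta}:(u,\theta)\in\B\times\Theta\}$ and $\mathcal{G}=\{g_{u,\theta}:(u,\theta)\in\B\times\Theta\}$. By the uniform $C^s$ bound these classes are contained in a bounded subset of $C^s(\X)$, $C^s(\Y)$ respectively, hence are $\mu$- (resp.\ $\nu$-) Donsker by the standard entropy bound for Hölder balls (Van der Vaart--Wellner), using $s>d/2$. Together with the established continuity of $(u,\theta)\mapsto(f_{u,\theta},g_{u,\theta})$, this yields joint weak convergence of the two-sample empirical process to a tight Gaussian process $\G$ on $C(\B\times\Theta)$ with covariance
\begin{align*}
\mathrm{Cov}\left(\G(u,\theta),\G(u',\theta')\right) = \mathrm{Cov}_\mu\left(f_{u,\theta},f_{u',\theta'}\right) + \mathrm{Cov}_\nu\left(g_{u,\theta},g_{u',\theta'}\right),
\end{align*}
the two terms coming from the two independent samples. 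Combining with the negligible remainder via Slutsky completes the proof. The main obstacle is the uniform (in $u$ and $\theta$) second-order control of the dual remainder $R_n$: one needs the strict-concavity/stability estimate for the entropic dual to hold with constants that do not degenerate as $(u,\theta)$ ranges over $\B\times\Theta$, which in turn rests on the uniform potential bounds; this is exactly the step where extending \citet{mena2019statistical} and \citet{goldfeld2024statistical} from quadratic to general smooth costs requires genuine work, since their arguments exploit the special structure of the quadratic cost.
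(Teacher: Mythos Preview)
Your proposal shares the two essential ingredients with the paper's proof---uniform $C^s$ bounds on the entropic potentials (the paper's Propositions~\ref{prop:potentials-bounded} and~\ref{prop:potentials-deriv-bounded}, obtained via Fa\`a di Bruno rather than a Sinkhorn-contraction argument) and the Donsker property of H\"older balls for $s>d/2$---but it diverges at the linearization step. You propose a second-order Taylor expansion of the dual objective and then try to kill a remainder $R_n$ of order $\sqrt{n}\|\hat\mu-\mu\|^2$ uniformly in $(u,\theta)$; you correctly flag this as the main obstacle. The paper avoids this obstacle entirely. Instead of a Taylor expansion, it proves a \emph{sandwich inequality} for the dual (Proposition~\ref{prop:hadamard}): using the Schr\"odinger first-order conditions one gets
\[
\int \varphi_{u,\theta}^{0}\,d(\mu_1-\mu_0)\;\le\;c(\mu_1)(u,\theta)-c(\mu_0)(u,\theta)\;\le\;\int \varphi_{u,\theta}^{1}\,d(\mu_1-\mu_0),
\]
where $\varphi_{u,\theta}^{j}$ is the optimal potential at $\mu_j$. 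Since both bounding potentials live in the same fixed $C^s$-ball $\mathscr F$, this gives immediately that $\mu\mapsto c(\mu)$ is globally Lipschitz in the dual norm $\|\cdot\|_{\mathscr F}$ and Gateaux differentiable with derivative $h\mapsto h(\varphi_{u,\theta})$. At that point the delta method for random measures of \citet[Proposition~1]{goldfeld2024statistical}---which requires only Lipschitz continuity in a Donsker dual norm plus Gateaux differentiability, not full Hadamard differentiability or quadratic remainder bounds---delivers the CLT directly.

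The practical difference is that your route requires translating strict concavity in the \emph{potentials} into a quadratic bound in the \emph{measures}, i.e.\ a uniform stability estimate for $(u,\theta)\mapsto(\mu,\nu)\mapsto(\varphi,\psi)$; this is doable but is exactly the work the paper's sandwich argument bypasses. Conversely, the paper's route needs an Arzel\`a--Ascoli subsequence argument to identify the Gateaux limit (showing $\varphi^{t_n}\to\varphi^0+a$), whereas your explicit expansion would give the limit form automatically. Both lead to the same covariance structure you wrote down.
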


  \begin{proof}
    See \Cref{app:UCLT}.
  \end{proof}

  \begin{remark}
    If $\dim\phi=1$, then considering this statement at a single point $u=1$ and given $\theta$ delivers asymptotic normality of the regularized OT value under a general smooth cost function $\phi$. To our knowledge, this is the first such result available in the literature. In our derivation, however, we relied heavily on the arguments in \citet{mena2019statistical}, who were the first to establish the asymptotic normality when the cost function is quadratic, and \citet{goldfeld2024statistical}, who extended this result using empirical processes theory.
  \end{remark}

Applying the functional delta method to this uniform convergence statement with the functional $\chi(c) \bydef\max_{u\in \B} c(u)$ leads to the following result.

\begin{corollary}[asymptotic distribution of the distance statistic]\label{cor:D-distribution}
  Suppose that \Cref{a:UCLT} holds.
  Then, for any $\theta_0 \in \Theta$,
    \begin{align*}
        \sqrt{n}(\hat D(\theta_0) - D(\theta_0)) = \sqrt n(\chi(\hat c_{\theta_0}))-\chi(c_{\theta_0})= \max_{u\in U_c(\theta_0)}\G(u,\theta_0),
      \end{align*}
      where $U_c(\theta_0) = \arg\max_{u\in\B} c_{\theta_0}(u)$.
\end{corollary}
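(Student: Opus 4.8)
The plan is to invoke the functional delta method for directionally differentiable maps, applied to the $\max$ functional $\chi(c) := \max_{u \in \B} c(u)$ acting on $C(\B)$. The key structural fact is that $\chi$ is Hadamard directionally differentiable (tangentially to $C(\B)$) at any $c$, with directional derivative
\[
  \chi_c'(h) = \max_{u \in U_c} h(u), \qquad U_c := \arg\max_{u \in \B} c(u),
\]
a classical result (Danskin/envelope theorem for the sup functional; see e.g.\ \citet{fang2019inference} or \citet{franguridi2025set}). The argument proceeds in three steps. First, I would record that $\B$ is compact and, under \Cref{a:UCLT}, $u \mapsto c_{\theta_0}(u)$ is continuous, so the argmax set $U_c(\theta_0)$ is a nonempty compact subset of $\B$ and $\chi$ is well-defined and finite. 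Second, I would cite \Cref{thm:UCLT}, specialized to the fixed parameter $\theta = \theta_0$, which gives $\sqrt{n}(\hat c_{\theta_0} - c_{\theta_0}) \weakto \G(\cdot,\theta_0)$ in $C(\B)$, with $\G(\cdot,\theta_0)$ a tight Gaussian process whose sample paths lie in $C(\B)$ (the tangent set). Third, I would apply the Hadamard directional delta method (van der Vaart--Wellner, or the version in \citet{fang2019inference}) to conclude
\[
  \sqrt{n}\bigl(\chi(\hat c_{\theta_0}) - \chi(c_{\theta_0})\bigr) \weakto \chi_{c_{\theta_0}}'\bigl(\G(\cdot,\theta_0)\bigr) = \max_{u \in U_c(\theta_0)} \G(u,\theta_0).
\]
Under $H_0:\theta=\theta_0$, if $\theta_0 \in \Theta_{I,0}$ then $D(\theta_0)=0$; in any case $\hat D(\theta_0)=\chi(\hat c_{\theta_0})$ and $D(\theta_0)=\chi(c_{\theta_0})$ by definition of $\hat D$ and $D$, which yields the displayed equality chain in the corollary statement.

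The main technical content that needs care is the directional differentiability of $\chi$ and the fact that the limit process lands in the correct tangential set. For the former, one shows: for any sequence $h_n \to h$ in $C(\B)$ and $t_n \downarrow 0$, $\frac{\chi(c + t_n h_n) - \chi(c)}{t_n} \to \max_{u \in U_c} h(u)$. The "$\le$" direction uses that any maximizer $u_n$ of $c + t_n h_n$ has its limit points in $U_c$ (by upper semicontinuity of the argmax correspondence under uniform convergence, using compactness of $\B$), together with equicontinuity/uniform convergence of $h_n$ to $h$; the "$\ge$" direction is immediate by evaluating at a fixed maximizer of $c$ lying in $U_c$. This is exactly the lemma \citet{franguridi2025set} establishes (it is cited in the excerpt for precisely this purpose), so in the paper I would simply quote it. For the latter, \Cref{thm:UCLT} already asserts $\G$ is a tight Gaussian process in $C(\B\times\Theta)$, so its restriction to $\{\theta_0\}$ has continuous sample paths in $C(\B)$, which is the full space here and hence automatically the required tangent set.

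I do not anticipate a serious obstacle: the heavy lifting — establishing the uniform CLT over $u \in \B$ under general smooth costs — is done in \Cref{thm:UCLT}, and the directional differentiability of the sup functional is standard and is quoted from \citet{franguridi2025set}. The only point worth stating explicitly is that weak convergence in $C(\B)$ (rather than merely finite-dimensional convergence) is what licenses the functional delta method, and that Hadamard \emph{directional} differentiability — not full Hadamard differentiability — suffices for the delta method to deliver weak convergence to the (possibly nonlinear) map $h \mapsto \max_{u \in U_c(\theta_0)} h(u)$, even though this map fails to be linear when $U_c(\theta_0)$ is not a singleton. I would close by remarking that the nonlinearity of this limit, and hence the potential failure of the ordinary bootstrap on the boundary of the identified set, is precisely what motivates the \citet{fang2019inference} bootstrap used in the subsequent inference results.
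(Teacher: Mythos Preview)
Your proposal is correct and follows essentially the same approach as the paper: cite Hadamard directional differentiability of the sup functional (the paper uses Theorem~3.1 in \citet{shapiro1991asymptotic} rather than \citet{fang2019inference}/\citet{franguridi2025set}, but it is the same fact), then apply the functional delta method to \Cref{thm:UCLT}. The extra care you take regarding the tangent set and the argmax correspondence is fine and consistent with the paper's terse proof.
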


\begin{proof}
  See \Cref{app:D-distribution}.
\end{proof}

The asymptotic distribution of the distance statistic is neither available in closed form, nor is easy to simulate from, because it depends on the unknown features of the data-generating process such as the argmax set $U_c(\theta_0)$.
Moreover, standard bootstrap often fails to control size uniformly in partially identified models \citep{andrews2009validity,andrews2009invalidity}.
This failure occurs when the parameter is on the boundary of the parameter space \citep{andrews2000inconsistency}.
In our model, this corresponds to the case where the hypothesized value $\theta_0$ is on the boundary of the identified set or, equivalently, where $U_c(\theta_0)$ is not a singleton, such as $\theta\in\{0.69,1.00\}$ in \Cref{fig:toy-example-population}.

\begin{algorithm}
  \caption{Testing $H_0:\theta=\theta_0$}
  \label{alg:bootstrap-test}
  \begin{algorithmic}[1]
  \Require $\theta_0$, $\varepsilon$, $\iota_n$, number of bootstrap draws $B$
  \Ensure Test decision for $H_0$

  \smallskip\Statex \textit{Compute the support function estimate:}
  \State Compute $\hat c(u) = \hat c_{\theta_0}(u)$ using Sinkhorn algorithm with regularization parameter $\varepsilon$
  \State Compute $\hat D(\theta_0) \leftarrow \max_{u\in\mathbb B} \hat c(u)$ via projected gradient ascent
  \State Define the set
      \[
          \hat U_n \leftarrow \left\{u\in\mathbb B :\, \hat c(u) \ge \max_{\|v\|\le 1} \hat c(v) - \iota_n \right\}
      \]

  \smallskip\Statex \textit{Bootstrap procedure:}
  \For{$b = 1,\dots,B$}
      \State Draw bootstrap samples $X_1^b,\dots,X_n^b$ and $Y_1^b,\dots,Y_n^b$ with replacement from $\hat\mu$ and $\hat\nu$
      \State Compute $\hat c^{*,b}(u)$, the regularized OT value on \{$X_i^b,Y_i^b\}_{i=1}^n$, using Sinkhorn algorithm
      \State Compute the bootstrap statistic
      \[
          \hat T^{*,b} \gets 
          \max_{u\in \hat U_n} 
          \sqrt{n}\left(\hat c^{*,b}(u) - \hat c(u)\right)
      \]
  \EndFor

  \smallskip\Statex \textit{Compute critical value and test decision:}
  \State Let $\hat q_\alpha^*$ be the $(1-\alpha)$ empirical quantile of  
      $\{\hat T^{*,1}, \dots, \hat T^{*,B}\}$
  \State \Return Reject $H_0$ if $\sqrt{n}\,\hat D(\theta_0) > \hat q_\alpha^*$

  \end{algorithmic}
\end{algorithm}

To overcome this challenge, we make use of the bootstrap for directionally differentiable functionals of \citet{fang2019inference}.
Our testing procedure is described in \Cref{alg:bootstrap-test} and depends on an additional tuning parameter $\iota_n$.
We impose the following assumption.

\begin{assumption}[bootstrap validity] \label{a:bootstrap-validity}
  \begin{subassumption}
    \item \label{a:sharp-maxima} There exists $\kappa>0$ such that 
    \begin{align*}
        c_{\theta_0}(u) \le \max_{v\in \B}c_{\theta_0}(v) - \kappa \cdot d_H(u,U_c(\theta_0)) \text{ for all } u \in \B.
    \end{align*}
    \item \label{a:bandwidth}
    $\iota_n \downarrow 0$ and $n^{-1/2} \iota_n\uparrow \infty$.
  \end{subassumption}
\end{assumption}

\Cref{a:sharp-maxima} posits that the maxima of $c_{\theta_0}$ are well-separated. This assumption is equivalent to the (super)gradient of $c_{\theta_0}$ being bounded away from zero on the complement of the argmax set $U_c(\theta_0)$. It also suffices for this assumption to hold in a small neighborhood around $U_c(\theta_0)$ rather than on the entire ball $\B$.
\Cref{a:bandwidth} requires $\iota_n$ to converge to zero slower than $n^{-1/2}$.
This guarantees that the enlarged argmax $\hat U_n$ converges in the Hausdorff distance to the true argmax $U_c(\theta_0)$.
If $U_c(\theta_0)$ is known to be a singleton, we can take $\hat U_n$ to be the sample argmax of $\hat c_{\theta_0}$.

It is straightforward to establish that under \Cref{a:bootstrap-validity}, our testing procedure controls size locally uniformly in the sense of Corollary 3.2 of \citet{fang2019inference}.
We refer the reader to Theorem 4 of \citet{franguridi2025set} for details.

\begin{remark}
  Applied researchers are often interested in one component (or a given function) of the parameter vector $\theta$. In the context of partially identified models, this problem is called subvector inference, and our inference procedure would be adaptable to this case as well.
  To see that, let $\theta=(\theta_1,\theta_2')$, where the scalar $\theta_1$ is the parameter of interest, and let the parameter space be $\Theta=\Theta_1\times \Theta_2$.
  Notice that the sharp identified set for $\theta_1$ is 
  \begin{align*}
    \Theta_{I,1} = \{\theta_1\in\Theta_1: \,\, D_1(\theta_1)=0\},
  \end{align*}
  where $D_1(\theta_1) = \min_{\theta_2 \in \Theta_2} D(\theta_1,\theta_2)$.
The criterion function $D_1(\theta_1)$ can then be estimated by $\hat D_1(\theta_1)=\min_{\theta_2\in\Theta_2} \hat D(\theta_1,\theta_2)$, and its uniform weak limit can be established by the delta method. Finally, the \citet{fang2019inference} bootstrap applies again because the functional $h \mapsto \min_{\theta_2 \in \Theta_2} h(\theta_1,\theta_2)$ is Hadamard directionally differentiable.

\end{remark}

\begin{remark}
  It is possible to make our test control size over the original identified set $\Theta_{I,0}$ by reducing the value of the distance statistic appropriately. Namely, as implied by the proof of Proposition 2 in \citet{hazard2025whom},
  \begin{align*}
    0 \le \hat c_{\theta_0,\varepsilon}(u) - \hat c_{\theta_0,0}(u) \le \varepsilon(\log n - \operatorname{KL}(\hat\pi_{\theta_0,\varepsilon}(u) \,||\, \hat\mu\otimes\hat\nu )),
  \end{align*}
  where $\hat\pi_{\theta_0,\varepsilon}(u)$ is the $\varepsilon$-regularized OT distribution with the cost function $u'\phi(\cdot,\cdot,\theta_0)$.
  Hence the (potentially conservative) test can be based on the adjusted statistic 
  \begin{align*}
    \sqrt{n} \max_{u\in\B} \left(\hat c_{\theta_0,\varepsilon}(u) - \varepsilon(\log n - \operatorname{KL}(\hat\pi_{\theta_0,\varepsilon}(u) \,||\, \hat\mu\otimes\hat\nu )) \right).
  \end{align*}
\end{remark}

\begin{remark}
  Taking the minimum of $\sqrt{n} \hat D(\theta)$ over the parameter space $\theta\in\Theta$ yields a statistic that can be used to develop a specification test, i.e., a test of the hypothesis $\Theta_I \neq \varnothing$, see, e.g., \citet{bugni2015specification} for a similar idea in the context of moment inequality models.
  The bootstrap of \citet{fang2019inference} can then be employed to obtain critical values for such a test.
\end{remark}

\subsection{Numerical implementation}\label{sec:implementation}

Our procedure requires computing the distance statistic \eqref{eq:D-sample}, which amounts to solving two nested optimization problems.

The inner problem (entropic OT) is solvable by a fast and simple numerical procedure called the \emph{Sinkhorn algorithm} described in Algorithm \ref{alg:sinkhorn}.

\begin{algorithm}
  \caption{Sinkhorn algorithm for entropic OT with cost $(x,y) \mapsto u'\phi(x,y,\theta)$}
  \label{alg:sinkhorn}
  \begin{algorithmic}[1]
  \Require Samples $x_1,\dots,x_n$; $y_1,\dots,y_m$; moment function $\phi(\cdot,\cdot,\theta)$; vector $u \in \B$; regularization parameter $\varepsilon > 0$
  \Ensure Approximate value $\hat c_\theta(u)$
  
  \smallskip\Statex \textit{Construct cost matrix:}
  \State $C_{ij} \leftarrow u'\phi(x_i,y_j,\theta)$ \quad for all $i=1,\dots,n$, $j=1,\dots,m$
  
  \smallskip\Statex \textit{Initialize kernel and scaling vectors:}
  \State $K \leftarrow \exp(-C/\varepsilon)$
  \State $a \leftarrow \mathbf{1}_n$, \; $b \leftarrow \mathbf{1}_m$
  
  \smallskip\Statex \textit{Iterate to enforce marginal constraints:}
  \While{convergence criterion not met}
      \State $a \leftarrow \frac{1/n}{K b}$
      \State $b \leftarrow \frac{1/m}{K' a}$
  \EndWhile
  
  \smallskip\Statex \textit{Compute entropic OT value:}
  \State $P \leftarrow \operatorname{diag}(a)\, K\, \operatorname{diag}(b)$
  \State $\hat c_\theta(u)
      \leftarrow 
      \sum_{i=1}^n\sum_{j=1}^m 
      P_{ij} C_{ij}
      \;+\;
      \varepsilon \sum_{i=1}^n\sum_{j=1}^m P_{ij}(\log P_{ij}-1)$
  
  \smallskip\State \Return $\hat c_\theta(u)$
  \end{algorithmic}
\end{algorithm}

\begin{algorithm}
  \caption{Projected gradient ascent}
  \label{alg:pga}
  \begin{algorithmic}[1]
  
  \Require Initial point $u_0 \in \mathbb B$, step size $\eta>0$, tolerance $\delta>0$
  \Ensure Approximate maximizer of $\hat c_\theta(u)$ over $u\in\B$
  
  \State $t \leftarrow 0$
  
  \While{ $\big\|\nabla \hat c_\theta(u_t)\big\|_2 > \delta$ }
      \Statex \textit{(1) Gradient step:}
      \State $u_{t+1} \leftarrow u_t + \eta\, \nabla \hat c_\theta(u_t)$
  
      \Statex \textit{(2) Projection step:}
      \State $u_{t+1} \leftarrow u_{t+1} \big/ \max\{1,\, \|u_{t+1}\|_2\}$
  
      \State $t \leftarrow t+1$
  \EndWhile
  
  \State \Return $u_t$
  
  \end{algorithmic}
\end{algorithm}

The outer problem is the maximization of a concave function $\hat c_\theta(u)$ (the output of \Cref{alg:sinkhorn}) over the unit ball $u\in\B$.
We solve this problem using \textit{projected gradient ascent} described in \Cref{alg:pga}.
Of course, many other off-the-shelf algorithms are available for constrained concave optimization; for a comprehensive review, see \citet{bubeck2015convex}.

In the numerical implementation, we treat the entropic regularization parameter $\varepsilon$ as fixed, and the thresholds $\eta_n$ and $\iota_n$ as sequences satisfying the stated high-level conditions. In our simulations and empirical example, the results are relatively stable when each tuning parameter is varied between one-half and twice its baseline value, since the boundaries are fairly sharp and disagreements across tuning choices are concentrated in thin neighborhoods of the boundaries. 

Alternatively, one could implement a data-driven selection of these tuning parameters. For example, $\varepsilon$ could be selected by a Lepskii-type rule that compares $\hat D_{\varepsilon}(\theta)$ over a grid and uses bootstrap fluctuations to balance regularization bias against sampling variability \citep{lepskii1992asymptotically,tsybakov2009introduction}. Then, conditional on $\varepsilon$, the thresholds $\eta_n$ and $\iota_n$ can be calibrated based on bootstrap estimates of the uniform variability of $\hat D(\cdot)$ and of the pre-max process, respectively. A full treatment of these tuning rules is left for future work.

\section{Illustrative examples}\label{sec:examples}

This section presents three empirical examples that illustrate how our methods can address partial identification problems arising from missing or incomplete data. The examples span different areas of econometrics: panel data, instrumental variables, and macro-finance. In each case, we show how the inability to observe certain joint distributions leads to partial identification of parameters of interest and how our methodology can be applied to characterize the identified sets in various contexts.

\subsection{Fixed effects panel logit with attrition and refreshment}\label{sec:logit}

This is our leading example and forms the basis for the Monte Carlo simulations in \Cref{sec:simulations}. Panel logit models are widely used in empirical studies to analyze binary outcomes while controlling for individual heterogeneity through fixed effects. However, attrition is a common issue in panel data, where units may drop out of the sample in subsequent periods for reasons potentially correlated with outcomes of interest. When a refreshment sample is available, the model can be partially identified via OT. With an adjustment that exploits the panel structure, our methodology can be used to estimate and construct confidence regions for the identified set for the common slope parameter and the AME. 

\subsubsection{Common slope parameter}
The common slope parameter is point identified under complete data or when attrition is independent of outcomes conditional on observables, but becomes partially identified under unrestricted attrition.
For notation simplicity, consider a static panel logit model with two periods $T=2$,
\begin{align}
Y_{it} = \mathbf{1}\left\{X_{it}'\theta + \alpha_i - \varepsilon_{it} > 0\right\}, \label{eq:logit-model}
\end{align}
where $\theta$ captures the effect of covariates on the outcome, $\alpha_i$ represents individual fixed effects, and $\varepsilon_{it}$ follows a standard logistic distribution. 

The standard approach to eliminate the incidental parameter $\alpha_i$ is to condition on the sufficient statistic $S_i = Y_{i1} + Y_{i2}$. For individuals with $S_i = 1$ (switchers), the conditional log-likelihood is
\begin{align*}
\ell_i(\theta \mid S_i=1) 
= Y_{i1}X_{i1}'\theta + Y_{i2}X_{i2}'\theta - \ln\left(e^{X_{i1}'\theta}+e^{X_{i2}'\theta}\right),
\end{align*}
with the corresponding conditional score function $s(Y_{i1},Y_{i2},X_{i1},X_{i2};\theta)$ and the moment condition
\begin{align*}
\E\left[s(Y_{i1},Y_{i2},X_{i1},X_{i2};\theta_0) \mid S_i=1\right] = 0.
\end{align*}
The explicit form of the score is given in \Cref{app:logit-beta}. To translate this moment condition into our OT framework, we embed the event $\{Y_{i1}+Y_{i2}=1\}$ into the cost function and define
\begin{align*}
\phi\left(y_1,y_2,x_1,x_2;\theta\right) 
= s(y_1,y_2,x_1,x_2;\theta)\mathbf{1}\left\{y_1+y_2=1\right\}.
\end{align*}
Since the outcome variables are binary, the indicator $\mathbf{1}\left\{Y_1+Y_2=1\right\}$ simply indexes a finite collection of cost functions in the continuous covariates, and is therefore covered by the finite discrete component extension discussed after \Cref{a:UCLT}.

A naive approach to partially identify $\theta$ would use only the marginal distributions from period 1 (original sample) and period 2 (refreshment sample). However, we can achieve tighter bounds by exploiting the panel structure. Since we observe both periods for retainers, we fix their joint distribution $f_{1,2\mid\text{ret}}$ and apply OT only to couple the attriter distributions $f_{1\mid\text{att}}$ and $f_{2\mid\text{att}}$ across periods. While $f_{1\mid\text{att}}$ is directly observed, $f_{2\mid\text{att}}$ is unobserved because attriters are missing in period 2, but it can be recovered from the observed refreshment and retainer samples via the law of total probability. See \Cref{app:logit-beta} for details.

Let $p$ denote the retention rate. The attriter contribution to the moment bounds is
\begin{align*}
\underline{\nu}_{\text{att}}(\theta) 
= \inf_{f \in \Pi(f_{1\mid\text{att}}, f_{2\mid\text{att}})} \E_f[\phi(Y_1, Y_2, X_1, X_2; \theta)], \quad
\overline{\nu}_{\text{att}}(\theta) 
= \sup_{f \in \Pi(f_{1\mid\text{att}}, f_{2\mid\text{att}})} \E_f[\phi(Y_1, Y_2, X_1, X_2; \theta)].
\end{align*}
Since $\E_{f_{1,2\mid\text{ret}}}[\phi(Y_1,Y_2,X_1,X_2;\theta)]$ can be computed directly from the observed data for retainers, the overall bounds are
\begin{align*}
  \underline{\nu}(\theta) 
  &= p \cdot \E_{f_{1,2\mid\text{ret}}}[\phi(Y_1,Y_2,X_1,X_2;\theta)] + (1-p) \cdot \underline{\nu}_{\text{att}}(\theta), \\
  \overline{\nu}(\theta) 
  &= p \cdot \E_{f_{1,2\mid\text{ret}}}[\phi(Y_1,Y_2,X_1,X_2;\theta)] + (1-p) \cdot \overline{\nu}_{\text{att}}(\theta).
  \end{align*}
The identified set for $\theta$ is therefore
$\Theta_{I,0} 
= \{\theta :\, \underline{\nu}(\theta) \leq 0 \leq \overline{\nu}(\theta)\}.$
\Cref{alg:fe-logit-bounds} in \Cref{app:logit-beta} presents our estimation procedure.

Our estimator naturally extends to dynamic panel logit models where lagged dependent variables appear as regressors. It is possible to incorporate the moment conditions developed by \citet{honore2024moment} as the cost function, with a similar but more complicated partition structure separating retainers and attriters to achieve efficiency gains.

\subsubsection{AME}

We now consider estimation of the AME, which captures the average change in outcome probability induced by a marginal change in a covariate and provides a directly interpretable measure for empirical work. Specifically, the AME of covariate $j$ at period $\tau$ is defined as
\[
\delta_{\tau,j} = \theta_j\E[\Lambda(X_{\tau}'\theta + \alpha)(1-\Lambda(X_{\tau}'\theta + \alpha))].
\]
The AME is partially identified even without attrition due to the incidental parameters problem combined with the nonlinear structure of the logit model. Under unrestricted attrition, this identification issue becomes more severe as the joint distribution of outcomes across periods is no longer observable.

\citet{davezies2021identification} show how to construct outer bounds on the AME without attrition. They show that $\delta_{\tau,j}$ belongs to the interval $\tilde{\delta} \pm \bar{b}$, where $\tilde{\delta} = \E[p(X_{1:T},S,\theta_0)]$ and $\bar{b} = \E[a(X_{1:T},S,\theta_0)]$ for functions $p$ and $a$ constructed from a degree-$(T+1)$ Chebyshev polynomial. The explicit formulas for $p$, $a$, and the associated coefficients are collected in \Cref{app:logit-ame-no-attrition}, including closed-form expressions for the case $T=2$.

Under unrestricted attrition, we extend the partition approach for $\theta$ to estimate the identified set for the AME.
First, we compute the identified set $\widehat\Theta_I$ for the common slope parameters $\theta$ as in \Cref{alg:fe-logit-bounds} and construct a finite grid $\left\{\theta^{(g)}\right\}\subset \widehat\Theta_I$.
Second, for each grid point $\theta^{(g)}$ we plug it into the Chebyshev approximation and, using our partition of retainers and attriters, compute the corresponding identified set for the AME conditional on $\theta^{(g)}$, denoted $\left[\underline{\delta}(\theta^{(g)}), \overline{\delta}(\theta^{(g)})\right]$, via OT.
Finally, we profile over $\theta$ by taking the union to obtain the identified set for the AME:
\(
\bigcup_g \left[\underline{\delta}(\theta^{(g)}), \overline{\delta}(\theta^{(g)})\right].
\)
See \Cref{app:logit-ame-attrition} for details.

For inference, we can replace the estimated identified set $\widehat\Theta_I$ in Step 1 with the confidence region for $\theta$ obtained by inverting the test in \Cref{alg:bootstrap-test}, and then profile over this confidence region to obtain a confidence region for the AME.

Although this grid-based profiling yields conservative estimates of the AME identified set due to the Chebyshev polynomial approximation and the two-stage approach that first estimates the identified set for $\theta$ and then the identified set for $\delta$, it delivers a significant improvement in computational efficiency over alternative methods such as Hankel moment matrix positivity or a single-stage approach that estimates both $\theta$ and $\delta$ simultaneously.

\subsection{Nonparametric IV without large support}

IV methods are commonly used in causal inference when strong ignorability fails. The classical control function approach to IV requires a large support condition for point identification. However, this assumption often fails in practice when the treatment is discrete or has limited variation, such as binary treatments in \citet{angrist1996identification} and discrete treatment intensity in \citet{angrist1995two}.
This example shows how our methodology can deliver meaningful bounds even when the large support assumption fails.

Consider the standard nonparametric IV model
\[
  Y = g(X,V), 
  \quad
  X = h(Z,W),
  \quad
  (W,V)\perp Z,
\]
where $Y$ is the outcome, $X$ is the endogenous treatment, $Z$ is the instrument, and $(V,W)$ are the unobserved errors.
We want to estimate a generic object of interest \(\theta=\E[\Lambda(g(X,V))]\) beyond the local average treatment effects (LATE). For example, setting $\Lambda(y)=y$ gives the average potential outcome $\E[g(x,V)]$ at treatment level $x$, and hence the average treatment effect (ATE) between two treatment levels $x_1$ and $x_0$ is $\E[g(x_1,V)] - \E[g(x_0,V)]$.
The classical control variable formula requires not only treatment monotonicity ($w \mapsto h(z,w)$ strictly increasing) but also large support ($\supp(R\mid V=v) = \supp(R)$ for all $v$, where $R = F_{X\mid Z}(X)$), see, e.g., Section 4.1 in \citet{gunsilius2025primer}. When this large support condition fails, $\theta$ becomes partially identified.

To formalize this setting, we impose two assumptions. First, similar to the classical setup, we assume treatment monotonicity: $w \mapsto h(z,w)$ is strictly increasing for each $z$, so under a monotone transformation, we can define the control variable $W=F_{X\mid Z}(X)$. Second, we also assume outcome monotonicity: $v \mapsto g(x,v)$ is strictly increasing, so define $V = F_{Y\mid X}(Y) \sim \text{Uniform}[0,1]$. Note that $W$ and $V$ are, in general, not independent. If $W$ has limited variation, such as discrete or censored treatment, the large support condition may fail, i.e., $\supp(W\mid V=v) \subsetneq \supp(W)$ for some $v$.

In many empirical applications, the instrument $Z$ is supported on a finite number of values. For example, \citet{angrist1995two} use quarter-of-birth dummies as instruments for schooling, and \citet{card1995geographic} uses a binary instrument based on proximity to four-year colleges.
For each $z$, the marginals $F_{W}$ and $F_V$ can be recovered from the data, and the moment function is
\[
  \phi(w,v;z) = \Lambda\left(g\left(F_{X\mid Z=z}^{-1}(w), v\right)\right).
\]
Then the sharp identified set for $\theta$ is the interval $\left[\underline\theta, \overline\theta\right]$ with
\[
  \underline\theta
  = \inf_{F\in\Pi(F_{W}, F_V)}
    \sum_z\E_F [\phi(W,V;z)]\Pr(Z=z),\quad
    \bar\theta
    = \sup_{F\in\Pi(F_{W}, F_V)}
      \sum_z\E_F [\phi(W,V;z)]\Pr(Z=z).
\]

\subsection{Euler equation estimation with repeated cross-sections}

A prominent example in macro-finance is estimating the discount factor $\beta$ and risk aversion $\gamma$ from the constant relative risk aversion (CRRA) Euler equation
\[
  \E\left[\beta\left(C_{i,t+1}/C_{it}\right)^{-\gamma}R_{t+1} - 1\mid \mathcal I_{it}\right] = 0,
\]
where $C_{it}$ is individual $i$'s consumption at time $t$, $R_{t+1}$ is the common asset return, and $\mathcal I_{it}$ is the information set.

In practice, this single nonlinear conditional moment is converted into an overidentified unconditional GMM by introducing a $k$-dimensional vector of instruments with $k\ge2$,
\(
  Z_{it} = \left(Z_{t}^A, Z_{it}^I\right),
\)
where $Z_t^A$ are lagged macro variables (such as GDP growth rates and interest rates), and $Z_{it}^I$ are lagged individual variables (such as demographics, as well as prior income and consumption). The moment condition then becomes
\[
  \E\left[Z_{it}\left(\beta\left(C_{i,t+1}/C_{it}\right)^{-\gamma}R_{t+1} - 1\right)\right] = 0.
\]
Under a standard rank condition, this equation can be used to estimate $(\beta,\gamma)$.

To estimate the Euler equation, one would ideally use data that preserve cross-sectional heterogeneity while providing a sufficient sample size, which in practice motivates the use of repeated cross-sections or short rotating panels. Much of the early empirical literature, however, relied on aggregate time-series consumption and return data, as in \citet{hansen_singleton1982}. Because the Euler equation is nonlinear in consumption, Jensen's inequality implies that the nonlinear moment evaluated at aggregate consumption would differ from the cross-sectional average of individual-level terms. This mismatch can induce systematic bias in GMM estimates. Subsequent work emphasized granular data to retain heterogeneity. For example, \citet{dynan_skinner_zeldes2004} exploit the panel structure of the Panel Study of Income Dynamics (PSID), but the PSID has a relatively small cross-sectional sample size. In contrast, many large-scale household surveys, such as the Consumer Expenditure Survey, offer rich cross-sectional coverage via repeated cross-sections or short rotating panels, making them well-suited for our OT-based approach. See also \citet{liu2023full} on estimating full structural models with repeated cross-sections.

In an extreme case, suppose we observe only repeated cross-sections, so that the marginal distributions
\(
  f_{C_{it},Z_{it}^I}(c,z^I)
  \text{ and }
  f_{C_{i,t+1}}(\tilde c),
\)
are known, in contrast to their joint law. Let $\theta = (\beta,\gamma)'$. Then the parameter is only partially identified with the identified set
\(
  \Theta
  = \left\{\theta :\, 
     \underline\nu(\theta)\le0\le \overline\nu(\theta)\right\},
\)
where
\[
  \underline\nu(\theta)
  = \inf_{f\in\Pi\left(f_{C_{it},Z_{it}^I}, f_{C_{i,t+1}}\right)}
    \E_f\left[\phi(Z_{it},C_{it},C_{i,t+1},R_{t+1};\theta)\right],
\]
and $\overline\nu(\theta)$ is the supremum of the same expression. Here $\Pi(\cdot)$ is the set of all couplings consistent with the observed marginals, and the moment function is 
\[
  \phi(z,c,\tilde c,r;\theta)
  = z'\left(\beta\left(\tilde c/c\right)^{-\gamma}r - 1\right).
\]

With short rotating panels, where households are observed for only several periods, the OT problem becomes more complex: one can exploit the limited longitudinal links to tighten the identified set while using OT for the remaining unlinked portions of the data.

\section{Monte Carlo simulation}\label{sec:simulations}

In this section, we give a small illustration of the performance of our inference procedure.

The data-generating process is the fixed effects panel logit with attrition and refreshment as described in \Cref{sec:logit} and defined in \eqref{eq:logit-model}. The true parameter is $\theta_0=(1.0,2.0)'$.
The covariate vector $X_{it}=(X_{it,1},X_{it,2})'$ consists of two independent components $X_{it,1}$ and $X_{it,2}$ that have discrete uniform distributions on three-valued sets $\mathcal{X}_1 = \{-0.5, 0, 0.5\}$ and $\mathcal{X}_2 = \{-0.75, 0, 0.75\}$, respectively, and are independent across units $i$ and time $t$.
The fixed effects $\alpha_i$ are drawn i.i.d.\ from the standard normal distribution.
The idiosyncratic error terms $\varepsilon_{it}$ are drawn i.i.d.\ from the standard logistic distribution with zero mean and unit scale.
The sizes of both the first-period and the refreshment samples are $n_\text{org}=n_\text{ref}=15000$. The attrition rate is $1-p=10\%$, and the units drop out of the sample completely at random.

We conduct $1000$ simulations. For each simulation, we construct the 90\% confidence region by inverting the test in \Cref{alg:bootstrap-test} on a grid of hypothesized values $\theta^* \in [-0.25,2.25]\times [0.75,3.25]$ centered at the true value $\theta_0=(1.0,2.0)'$.
We use \Cref{alg:sinkhorn} to solve the discrete entropic OT problem on a grid, where each of the two marginals is defined on $2\times 3\times 3$ values in the joint support $\{0,1\}\times \mathcal{X}_1\times\mathcal{X}_2$ of $(y_{it},x_{it1},x_{it2})$.
We then use grid search on the unit circle to calculate the distance $\hat D(\theta^*)$.

We set $\varepsilon=0.05$ for entropic regularization, $\eta_n=0.005$ for estimating the identified set, and $\iota_n=0.05$ for constructing the enlarged argmax set in the bootstrap procedure. These choices are guided by the theoretical rate conditions that $\eta_n$ and $\iota_n$ would vanish while dominating first-order sampling fluctuations. For example, to get a sense of scale, notice that $\log n/\sqrt n\approx 0.079$ for sample size $n=15000$. We also assess sensitivity to tuning choices by varying each parameter between one-half and twice its baseline value. The resulting estimated identified sets and confidence regions are nearly unchanged, with discrepancies concentrated in a narrow band around the boundary of the identified set.

As a full bootstrap with many resamples in each replication is computationally costly, we use a warp speed implementation in the sense of \citet{giacomini2013warp}. In each simulated dataset, we draw one bootstrap sample and pool the resulting bootstrap statistics across Monte Carlo replications to estimate the critical value. Under bootstrap consistency, the conditional bootstrap critical values would be stable across datasets in large samples, so this pooled quantile approximates the threshold obtained from a full bootstrap. We then compute the AME bounds on the same simulated datasets by profiling over the resulting confidence regions. Since the AME is a second-stage functional of the confidence region, the Monte Carlo evidence for AME coverage would be interpreted as suggestive, and a formal assessment would require either a substantially heavier simulation design or a separate justification for the profiled bounds. The computation is fast: on a MacBook Pro with an M3 processor and 12 cores, the full set of $1000$ bootstrap draws takes about 30 minutes.

\Cref{fig:mc} shows the average distance statistic and coverage probabilities across simulations at each grid point, with the identified set $\Theta_{I,\varepsilon}$ indicated by the black contour line. On the left panel, the average distance is close to zero over the interior of the identified set and increases outside it, so the statistic tracks the geometry of the identified set $\Theta_{I,\varepsilon}$ reasonably well. This pattern also explains the tuning robustness: moderate changes in the tuning parameters mainly affect the narrow region where $\hat D(\theta^*)$ starts to rise from zero. On the right panel, coverage probabilities are close to one in the interior of the identified set and fall quickly outside it. \Cref{fig:mc-coverage-numbers}, which reports the numerical values of the coverage probabilities, shows more clearly that the boundary of the identified set is close to the 90\% coverage frontier.

\Cref{tab:mc-ame} reports the median estimated identified intervals and median 90\% confidence intervals for the AME of each covariate across simulations. Although the true AMEs are positive, the intervals are wide and include zero for both covariates, reflecting partial identification from unrestricted attrition and from the nonlinear fixed effects AME, as well as additional conservativeness from profiling the AME over the confidence region for $\theta$. By construction, the estimated identified interval need not be contained in the confidence interval, although the two would be close in general.

\begin{figure}[tp]
  \centering
  \includegraphics[width=\textwidth]{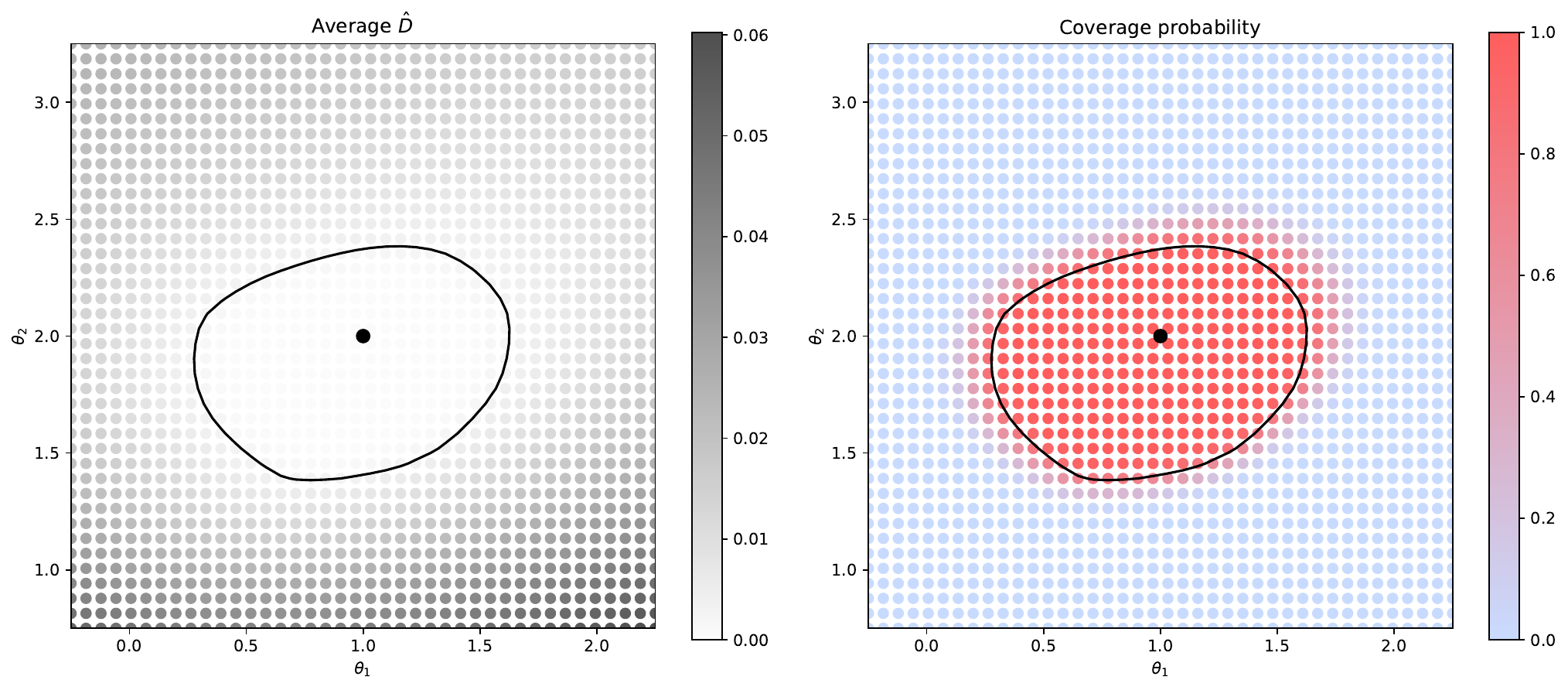}\vspace{-0.5cm}
  \caption{Left panel: average distance $\hat D(\theta)$ across $1000$ simulations. Right panel: coverage probability of $90\%$ confidence region. Black contour line: identified set $\Theta_{I,\varepsilon}$. Black dot: true value $\theta_0$.}
  \label{fig:mc}
\end{figure}

\begin{figure}[tp]
  \centering
  \includegraphics[width=0.6\textwidth]{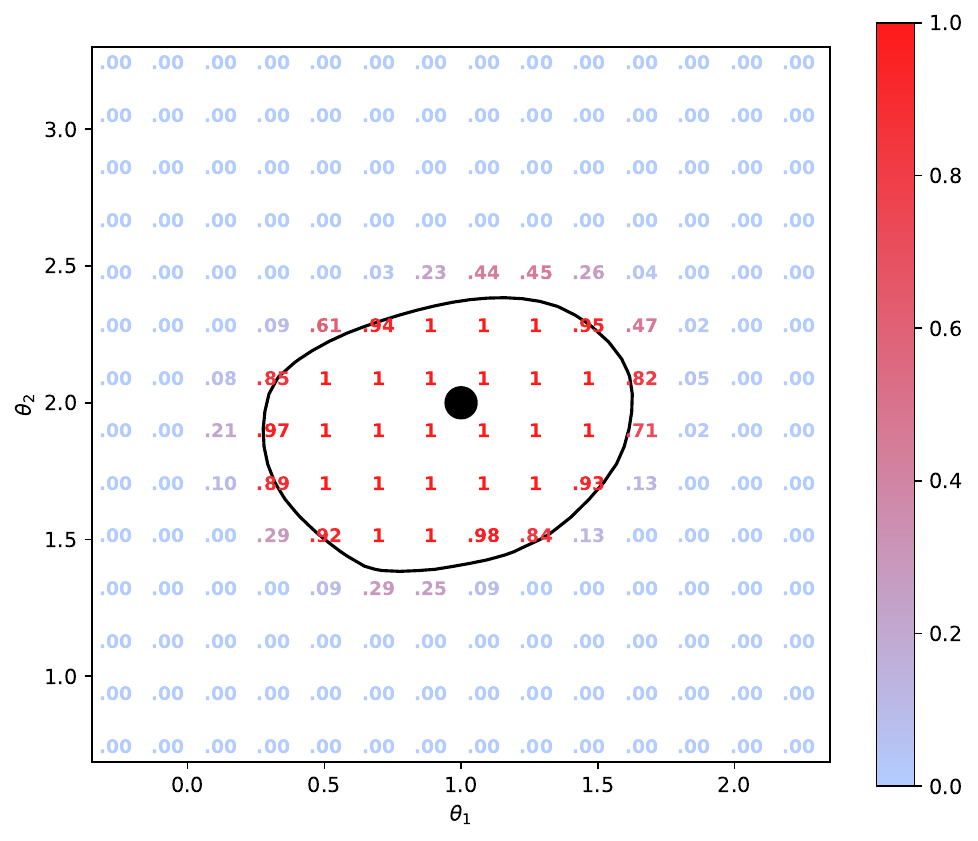}\vspace{-0.25cm}
  \caption{Coverage probability of $90\%$ confidence region for $\theta$ across $1000$ simulations, with values displayed at each grid point. Black contour line: identified set $\Theta_{I,\varepsilon}$. Black dot: true value $\theta_0$.}
  \label{fig:mc-coverage-numbers}
\end{figure}

\begin{table}[tp]
\centering
\bigskip
\renewcommand{\arraystretch}{1.2}
\begin{tabular}{lccc}
\hline\hline
Covariate & True & Est.\ id interval & 90\% CI \\
\hline
$X_1$ & $0.165$ & $[-0.426,\ 0.120]$ & $[-0.379,\ 0.118]$ \\
$X_2$ & $0.331$ & $[-0.690,\ 0.192]$ & $[-0.626,\ 0.190]$ \\
\hline
\end{tabular}
\caption{Median estimated identified intervals and 90\% confidence intervals for the AME across $1000$ simulations.}
\label{tab:mc-ame}
\end{table}
\section{Empirical illustration}\label{sec:empirical}

We study how self-reported happiness depends on agreeableness and extroversion when unobserved individual heterogeneity and panel attrition are both present. 
Specifically, we consider the panel logit regression
\begin{align*}
\text{happiness}_{it} = \mathbf{1} \left\{\alpha_i + \theta_1 \cdot\text{agreeableness}_{it} + \theta_2 \cdot \text{extroversion}_{it} - \varepsilon_{it} > 0 \right\},
\end{align*}
where $\alpha_i$ is an unobserved individual fixed effect. 
Panel data with fixed effects help control for differences in baseline happiness and reporting styles across individuals, but attrition may make retainer-only estimates sensitive to sample selection.
A refreshment sample provides additional information about the second-period distribution.
Our empirical example illustrates how findings based on the retainer-only (balanced) panel can change once attrition is taken into account while leaving its mechanism unrestricted.

We use survey data from the Understanding America Study (UAS), a large household panel collected and maintained by the USC Center for Economic and Social Research.
Specifically, we use the self-reported happiness, the agreeableness score, and the extroversion score for participants of waves 12 and 13 of the UAS.\footnote{See the \href{https://uasdata.usc.edu/page/Comprehensive+File+And+Panel+Dataset}{UAS Comprehensive Panel Dataset Description}, available on the UAS website, for information on the happiness score (wos002), the agreeableness score (pagreeableness), and the extroversion score (pextroversion).}
We convert the happiness score into a binary variable $\text{happiness}_{it}$, which is $0$ if the happiness score is below its sample median, and $1$ otherwise.
We also convert agreeableness and extroversion scores into 8 quantile bins, so that the resulting variables $\text{agreeableness}_{it}$ and $\text{extroversion}_{it}$ take integer values from $0$ to $7$.

The wave 12 sample contains $n_{\rm org}=6307$ respondents. Among them, $n_{\rm ret}=5131$ are retainers observed again in wave 13, so the attrition rate is $1-\hat p\approx 19\%$. We also observe a wave 13 refreshment sample of size $n_{\rm ref}=4449$. Among attritors, 27.1\% have $\text{happiness}_{i1}=1$. Among the refreshment sample, 27.5\% have $\text{happiness}_{i2}=1$. Among retainers, 18.5\% have $\text{happiness}_{i1}=\text{happiness}_{i2}=1$, 59.2\% have $\text{happiness}_{i1}=\text{happiness}_{i2}=0$, and 22.3\% switch happiness status across waves.
These patterns highlight both the importance of fixed effects, since most retainers have the same happiness status in both waves, and the relevance of attrition, since a nontrivial share of the original sample is not observed in wave 13.
The covariates also exhibit time variation: 72.0\% of retainers change agreeableness level and 64.3\% change extroversion level across waves.

We compare two sets of estimates. The first is a standard panel logit with fixed effects on the retainer-only panel. The second is our OT partial identification methodology applied to the combined dataset while leaving the attrition mechanism unrestricted. For the latter, we set $\varepsilon=0.05$, $\eta=0.005$, and $\iota=0.05$, the same as in the Monte Carlo simulations, since the sample sizes are similar. The results are robust to moderate changes in the tuning parameters; see also the discussion in \Cref{sec:simulations}.

In \Cref{tab:stayers}, we report the retainer-only estimates and 90\% confidence intervals for $\theta$, together with the combined sample 90\% confidence intervals under partial identification. In \Cref{fig:emp-app}, we plot the estimated identified set and the 90\% confidence region based on $B=1000$ bootstrap samples, with the retainer-only confidence intervals overlaid for comparison.
In \Cref{tab:ame_bounds_ci}, we report the estimated identified intervals and 90\% confidence intervals for the AME of each covariate.
The retainer-only analysis gives positive estimates for both coefficients, with 90\% confidence intervals excluding zero. In contrast, the combined sample partial identification intervals are much wider and include zero. 
Thus, the positive relationship suggested by the retainer-only panel is not robust to relaxing the assumptions on the attrition mechanism, which may partly reflect nonrandom attrition. 

\begin{table}[tp]
    \centering
    \bigskip
    \renewcommand{\arraystretch}{1.2}
    \begin{tabular}{lccc c}
    \hline\hline
     & \multicolumn{2}{c}{Retainer-only} & & Combined sample, partial id\\
    \cline{2-3} \cline{5-5}
    Parameter & Est. & $90\%$ CI & & $90\%$ CI \\
    \hline
    $\theta_1$ & $0.055$ & $[0.006,\ 0.105]$ & & $[-0.332,\ 0.465]$ \\
    $\theta_2$ & $0.093$ & $[0.030,\ 0.155]$ & & $[-0.446,\ 0.617]$ \\
    \hline
    \end{tabular}
    \caption{Parameter estimates and 90\% confidence intervals for the retainer-only sample versus the combined sample under partial identification.}
    \label{tab:stayers}
\end{table}

\begin{figure}[tp]
    \centering
    \includegraphics[width=\textwidth]{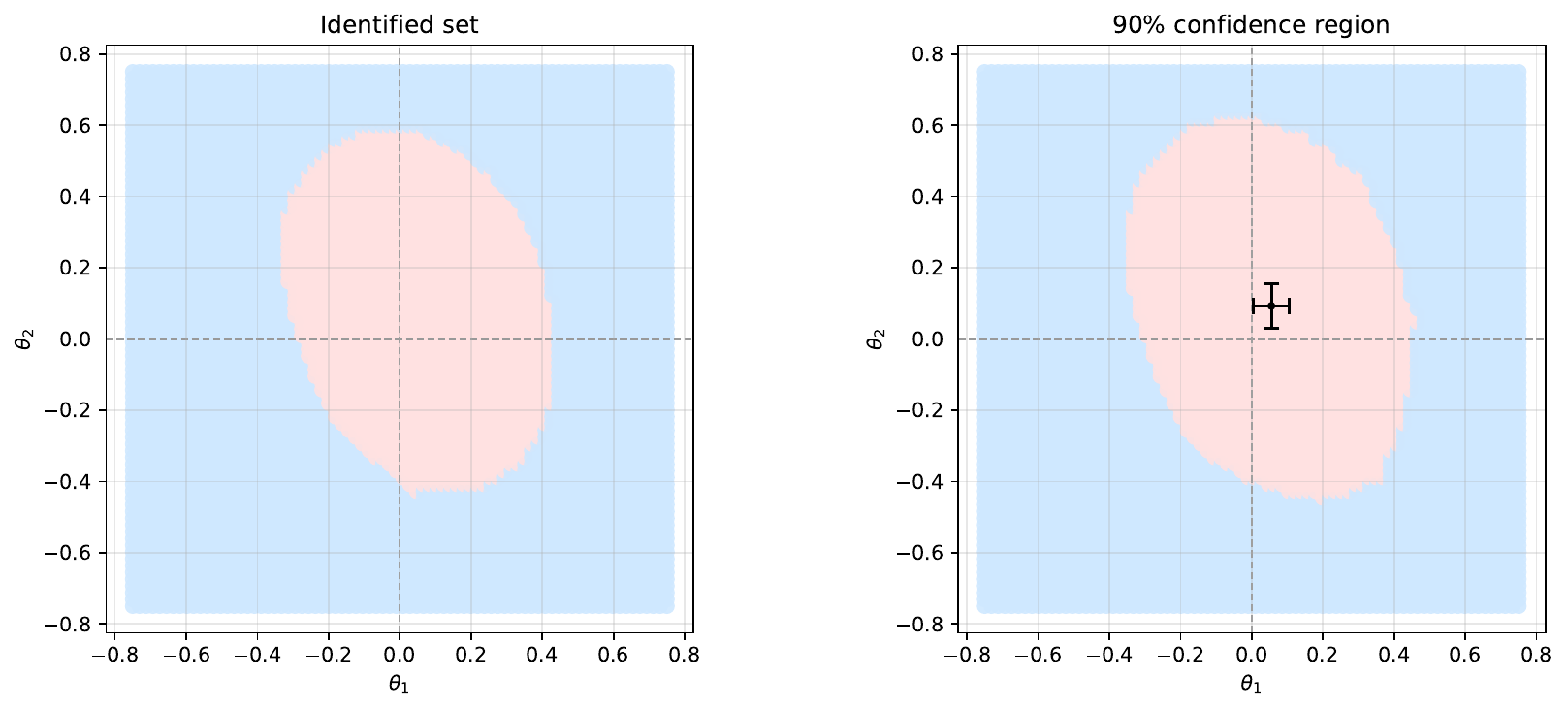}\vspace{-0.5cm}
    \caption{Left panel: estimated identified set for $\theta$ in the empirical illustration. Right panel: $90\%$ confidence region with $B=1000$ bootstrap samples; black intervals are $90\%$ confidence intervals based on the retainer-only sample.} 
    \label{fig:emp-app}
\end{figure}

\begin{table}[tp]
    \centering
    \bigskip
    \renewcommand{\arraystretch}{1.2}
    \begin{tabular}{lcc}
    \hline\hline
    Covariate & Est.\ id interval & 90\% CI \\
    \hline
    Agreeableness & $[-0.015,\ 0.021]$ & $[-0.015,\ 0.021]$ \\
    Extroversion  & $[-0.016,\ 0.020]$ & $[-0.016,\ 0.020]$ \\
    \hline
    \end{tabular}
    \caption{Estimated identified intervals and 90\% confidence intervals for the AME of each covariate.}
    \label{tab:ame_bounds_ci}
    \end{table}

\section{Conclusion}\label{sec:conclusion}

This paper develops a methodology for estimation and inference in GMM where the distribution of the data is identified only up to its marginals. We characterize the identified set for the parameter of interest using tools from convex analysis and OT. The practical implementation of classical OT is hindered by both theoretical and computational limitations. To overcome these issues, we rely on the regularized (entropic) version of OT. The resulting OT-based characterization directly informs an estimator and a test statistic for conducting inference and constructing confidence regions.

We establish a central limit theorem for the entropic OT value under smooth cost functions and use it to show $\sqrt{n}$-consistency and asymptotic normality of our proposed statistic. We then obtain valid critical values via the bootstrap for directionally differentiable functionals developed in \citet{fang2019inference}.

Our estimation and inference methodology is generic and computationally efficient. It is also relevant for applied work, since many important economic questions, ranging from the effects of policy interventions to the dynamics of household behavior, involve parameters that are characterized via OT-based partially identified GMM.

Our framework admits several promising theoretical extensions. First, it naturally extends to settings with more than two marginals, such as panel data with multiple waves or repeated cross-sections over several periods (multi-marginal OT). Second, it would be of interest to generalize it to GMM models with conditional moment restrictions. Finally, when the moment conditions underidentify the parameter even under point identification of the data distribution, the interaction between these two sources of partial identification is theoretically challenging and deserving of further study. We leave these extensions for future work.

\bibliographystyle{ecta}
\bibliography{references}

\newpage

\appendix
\renewcommand{\appendixpagename}{Online Appendix}
\appendixpage

\begin{subappendices}
\crefalias{section}{subappendix}
\crefalias{subsection}{subappendix}
\renewcommand{\thesection}{\Alph{section}}
\renewcommand{\thesubsection}{\thesection.\arabic{subsection}}

\section{Proofs of theoretical results}\label{app:proofs}

\subsection{Proof of \Cref{thm:id-set-characterization}}\label{app:id-set-characterization}

    We follow the proof of Proposition 1 in \citet{franguridi2025set} closely.
    By \Cref{a:correct-specification}, $\Theta_{I,0}$ is nonempty.
    
  \textbf{Step 1: $\nu_\Pi(\theta)$ is compact and convex.}
  
  \noindent
  Convexity is trivial.
  In view of \Cref{a:supports-compact,a:phi-continuous}, Theorem 15.11 and Corollary 15.7 in \citet{aliprantis2006infinite} imply that $\Pi$ is compact in the weak topology and the map $\pi \mapsto \E_{\pi}\phi(X,Y,\theta)$ is continuous.
  Hence, $\nu_\Pi(\theta)$ is compact as the image of a compact set under a continuous map.
  
  \noindent
  \textbf{Step 2: $\Theta_{I,0} = D_0^{-1}(\{0\})$.}
  
  \noindent
  Since $\nu_\Pi(\theta)$ is closed and convex, $0\in \nu_\Pi(\theta)$ if and only if its support function is everywhere nonnegative
  \begin{align*}
    \psi_{\nu_\Pi(\theta)}(u) \bydef \max_{\pi\in\Pi} u'\E_\pi \phi(X,Y,\theta) \ge 0  \text{ for all } u\in \R^{\dim(\phi)}.
  \end{align*}
  Indeed, if $0\in \nu_\Pi(\theta)$, then $\psi_{\nu_\Pi(\theta)}(u) = \max_{\nu\in \nu_\Pi(\theta)} u'\nu \ge 0$.
  Conversely, if $0\notin \nu_\Pi(\theta)$, then by strong separation of a point from the closed convex set $\nu_\Pi(\theta)$, there exists $u\neq 0$ and $\alpha>0$ such that $u'\nu\le -\alpha$ for all $\nu\in \nu_\Pi(\theta)$.
  This implies $\psi_{\nu_\Pi(\theta)}(u) \le -\alpha < 0$.
  
  Since $\psi_{\nu_\Pi(\theta)}(0)=0$ for all $\theta$, nonnegativity of $\psi_{\nu_\Pi(\theta)}$ is equivalent to the equality of
  \begin{align}
      \min_{u\in \R^{\dim(\phi)}} \psi_{\nu_\Pi(\theta)}(u) \label{eq:min-psi-Rd}
  \end{align}
  to zero.
  Let us show that the latter is equivalent to the equality of
  \begin{align}
      \min_{u\in \B} \psi_{\nu_\Pi(\theta)}(u) \label{eq:min-psi-B}   
  \end{align}
  to zero.
  Indeed, if \eqref{eq:min-psi-Rd} is zero, then the minimum is achieved at $u=0$, and hence \eqref{eq:min-psi-B} is zero.
  Conversely, if \eqref{eq:min-psi-Rd} is nonzero, then there exists $u$ such that $\psi_{\nu_\Pi(\theta)}(u) < 0$. By the positive homogeneity of the support function, $\psi_{\nu_\Pi(\theta)}(u/\|u\|) = \psi_{\nu_\Pi(\theta)}(u)/\|u\|<0$, and hence the expression \eqref{eq:min-psi-B} is negative.
  
  \noindent
  \textbf{Step 3: $\Theta_{I,0}$ is compact.}
  
  \noindent
  Since $\Theta_{I,0} = D_0^{-1}(\{0\})$, it suffices to establish the continuity of $D_0(\theta)$.
  For this, notice that
  \begin{align*}
      (u,\theta,\pi) \mapsto u' \E_{\pi}\phi(X,Y,\theta)
  \end{align*}
  is a continuous function on the compact set $\B \times \Theta\times \Pi$, where $\Pi$ is equipped with the weak topology.
  Berge's maximum theorem (see, e.g., Theorem 17.31 in \citet{aliprantis2006infinite}) implies that the function
  \begin{align*}
      (u,\theta) \mapsto \max_{\pi\in\Pi}  u'\E_{\pi}\phi(X,Y,\theta) = \psi_{\nu_\Pi(\theta)}(u)
  \end{align*}
  is continuous on the compact set $\B \times \Theta$.
  Applying Berge's theorem again implies that $D_0(\theta)$ is continuous, completing the proof.

\subsection{Proof of \Cref{prop:consistency}}\label{app:consistency}

By \Cref{a:dominating-m}, the set 
$
\Theta_I = \{\theta \in \Theta : D(\theta) = 0\}
$
is nonempty. Moreover, by \Cref{a:id-set-characterization} and the same Berge-type argument
as in the proof of \Cref{thm:id-set-characterization}, the map \(\theta \mapsto D(\theta)\) is
continuous on the compact set \(\Theta\). Hence \(\Theta_I\) is a compact subset of
\(\Theta\).

    We will show that $\Theta_I \subset \widehat\Theta_I$ w.p.a.\ 1 and that for every $\delta>0$, $\widehat\Theta_I\subset \Theta_I^\delta$ w.p.a.\ 1, where $\Theta_I^\delta$ is the $\delta$-enlargement of $\Theta_I$, i.e.
    \[
      \Theta_I^\delta = \left\{\theta\in\Theta: d(\theta,\Theta_I) \le \delta \right\}
    \]
    Nonemptiness and compactness of $\Theta_I$ will then imply the required Hausdorff convergence.
  
    First, let us show that $\Theta_I \subset \widehat\Theta_I$ w.p.a.\ 1.
    We have
    \begin{align*}
      \sup_{\theta\in\Theta_I} \hat D(\theta) \le \sup_{\theta\in\Theta_I} D(\theta) + \|\hat D-D\|_\infty =  \|\hat D-D\|_\infty.
    \end{align*}
    By \cref{a:distance-rate} and \cref{a:eta}, the right-hand side is smaller than $\eta_n$ w.p.a.\ 1, and hence $\Theta_I \subset \widehat\Theta_I$ w.p.a.\ 1.
  
    Now, let us show that for every $\delta>0$, $\widehat\Theta_I\subset \Theta_I^\delta$ w.p.a.\ 1.
    By \cref{a:dominating-m}, we have $\inf_{\theta\notin\Theta_I^\delta} D(\theta) \ge m(\delta) > 0$.
    On the event $\|\hat D-D\|_\infty \le r_n$, we have
    \begin{align*}
      \inf_{\theta\notin\Theta_I^\delta} \hat D(\theta) \ge \inf_{\theta\notin \Theta_I^\delta} D(\theta) - \|\hat D-D\|_\infty \ge m(\delta) - r_n.
    \end{align*}
    Choose $n$ large enough so that $r_n \le \eta_n < m(\delta)/2$. Then 
    \begin{align*}
      \inf_{\theta\notin\Theta_I^\delta} \hat D(\theta) \ge m(\delta)/2 > \eta_n,
    \end{align*}
    and hence no $\theta$ outside of $\Theta_I^\delta$ belongs to $\widehat\Theta_I$. Therefore, $\widehat\Theta_I\subset \Theta_I^\delta$ w.p.a.\ 1.

\subsection{Proof of \Cref{thm:UCLT}}\label{app:UCLT}

For a function $f \in C^s(\X\times\Y)$, denote its H\"older norm by
\begin{align*}
    \|f\|_{C^s(\X\times\Y)} = \max_{0 \le |\alpha| \le s} \sup_{(x,y) \in \X \times \Y} \left| \nabla^\alpha f(x,y) \right|,
\end{align*}
where the maximum is taken over all multi-indices $\alpha=(\alpha_1,\dots,\alpha_{2d}) \in \N_0^{2d}$ or order $|\alpha|:=\alpha_1+\cdots+\alpha_{2d} \le s$, and the partial derivative operator
\begin{align*}
    \nabla^\alpha f(x,y) := \frac{\partial^{|\alpha|}}{\partial x_1^{\alpha_1} \cdots \partial x_d^{\alpha_d} \partial y_1^{\alpha_{d+1}} \cdots \partial y_d^{\alpha_{2d}}} f(x,y).
\end{align*}
Assumptions \ref{a:param-compact}, \ref{a:supports-compact}, and \ref{a:cost} imply that there exist finite constants $\bar\phi$ and $\bar \phi_s$ such that 
\begin{align*}
    \sup_{u\in \B} \sup_{\theta\in\Theta} \sup_{(x,y)\in\X\times\Y} |u'\phi(x,y,\theta)| \le \bar \phi
    \end{align*}
 and
\begin{align*}
\sup_{u\in \B} \sup_{\theta\in\Theta} \|u'\phi(\cdot,\cdot,\theta)\|_{C^s (\X \times \Y)} \le \bar \phi_s.
\end{align*}
Define the mapping $c: \ell^\infty(\F) \to \ell^\infty(\B\times\Theta)$ by
\begin{align*}
    c(\mu)(u,\theta) = \sup_{\varphi,\psi} \int \varphi \, d\mu + \int \psi \,d \nu - \int e^{\varphi \oplus \psi - u'\phi(\cdot,\cdot,\theta)}\,d \mu\otimes\nu + 1
\end{align*}

\begin{proof}

    \textbf{Part (i)}. Since $s > d/2$, the class
        \begin{align*}
            \mathscr{F} = \left\{ f \in C^s(\X) \text{ such that }\|f\|_{C^s} \le C_{s,d} \right\},
        \end{align*}
        where $C_{s,d}$ is the constant in \Cref{prop:potentials-deriv-bounded}, is Donsker, see, e.g., Corollary 2.7.2 in \citet{vaart2023empirical}.
        Therefore,
        \[
        \sqrt{n}(\hat\mu-\mu) \weakto \G \text{ in } \ell^\infty(\mathscr{F}),
        \]
        where $\G$ is the generalized $\mu$-Brownian bridge.
        In view of \Cref{prop:hadamard}, we can apply the delta method for random measures in Proposition 1 of \citet{goldfeld2024statistical} to obtain
        \begin{align*}
            \sqrt n(\theta(\hat\mu)-\theta(\mu)) \weakto \theta_{\mu}'(\G) = \G(\varphi)\sim N(0,\V_{\mu}(\varphi)).
        \end{align*}
        Since $\theta_{\mu}'$ is the point evaluation at $\varphi$, Proposition 2 in \citet{goldfeld2024statistical} implies that $\theta(\hat\mu)$ is semiparametrically efficient.
    
        Finally, since $\theta_{\mu}'$ is linear, $\rho \mapsto \theta(\rho)$ is (fully) Hadamard differentiable at $\rho=\mu$ tangentially to $\supp \G$ by Corollary 1 in \citet{goldfeld2024statistical}. Hence, nonparametric bootstrap is consistent by Theorem 23.9 in \citet{van2000asymptotic}.
        
    \textbf{Part (ii)}. Define the function class
    \begin{align*}
        \mathscr{F}^{\oplus} = \{\varphi\oplus \psi: (\varphi,\psi) \text{ satisfies } \eqref{eq:phi-deriv-bound}, \eqref{eq:psi-deriv-bound}  \}.
    \end{align*}
    Let us show that
    \begin{align}
        |\theta(\mu_1,\nu_1)-\theta(\mu_0,\nu_0)| \le \| \mu_1 \otimes \nu_1 - \mu_0\otimes \nu_0\|_{\mathscr{F}^\oplus} \label{eq:theta-Lipschitz}
    \end{align}
    for any probability measures $\mu_0,\mu_1$ on $\X$ and probability measures $\nu_0,\nu_1$ on $\Y$.
    Let $\varphi_{ij}, \psi_{ij}$ be optimal potentials for $(\mu_i,\nu_j)$ satisfying \eqref{eq:phi-deriv-bound}, \eqref{eq:psi-deriv-bound}.
    Then \eqref{eq:theta-lower} implies
    \begin{align*}
        \theta(\mu_1,\nu_1)-\theta(\mu_0,\nu_0) &= \theta(\mu_1,\nu_1)-\theta(\mu_0,\nu_1)+\theta(\mu_0,\nu_1)-\theta(\mu_0,\nu_0) \\
        &\ge \int \varphi_{01} \, d(\mu_1-\mu_0)+\int \psi_{00}\, d(\nu_1-\nu_0) \\
        &= \int (\varphi_{01}\oplus \psi_{00}) \, d( \mu_1\otimes \nu_1-\mu_0\otimes\nu_0).
    \end{align*}
    Similarly, \eqref{eq:theta-upper} implies
    \begin{align*}
        \theta(\mu_1,\nu_1)-\theta(\mu_0,\nu_0) \le \int (\varphi_{11}\oplus \psi_{01}) \, d( \mu_1\otimes \nu_1-\mu_0\otimes\nu_0).
    \end{align*}
    Since $\varphi_{ij}\oplus \psi_{k\ell} \in \mathscr{F}^\oplus$, we obtain \eqref{eq:theta-Lipschitz}.
    
    Moreover, arguing as in the proof of \Cref{prop:hadamard}, we obtain
    \begin{align*}
        \lim_{t\downarrow 0} \frac{\theta(\mu_0+t(\mu_1-\mu_0), \mu_0+t(\nu_1-\nu_0)) - \theta(\mu_0,\nu_0)}{t} = \int (\varphi_{00}\oplus \psi_{00}) \, d( \mu_1\otimes \nu_1-\mu_0\otimes\nu_0).
    \end{align*}
    
    Define $\mathscr{P}_0$ as the set of probability measures of the form $\rho_1\otimes\rho_2$, where $\rho_1$, $\rho_2$ concentrate in $\X$, $\Y$, respectively.
    Applying Proposition 1 in \citet{goldfeld2024statistical} to the statement in \Cref{prop:Donsker-joint} for $\delta(\rho_1\otimes \rho_2)=\theta(\rho_1,\rho_2)$ and $\mathscr{F}=\mathscr{F}^\oplus$ yields
    \begin{align*}
        \sqrt n(\theta(\hat\mu_n,\hat\nu_n)-\theta(\mu,\nu))&=\sqrt{n}(\delta(\hat\mu_n\otimes \hat\nu_n)-\delta(\mu\otimes \nu)) \\
        &\weakto \delta'_{\mu\otimes \nu}(\G_{\mu\otimes\nu})=\G_{\mu\otimes\nu}(\varphi\oplus\psi) \sim N(0, \V_\mu(\varphi)+\V_\nu(\psi)).
    \end{align*}
    Finally, $\V_\mu(\varphi)+\V_\nu(\psi)$ is the semiparametric variance bound due to Corollary 2 of \citet{goldfeld2024statistical}.
\end{proof}


The following result and its proof follow Proposition A.1 in \citet{mena2019statistical}.

\begin{proposition}\label{prop:potentials-bounded}
    For any $u\in \B$ and $\theta\in\Theta$, there exist optimal potentials $\varphi_{u,\theta}, \psi_{u,\theta}$ such that $\varphi_{u,\theta} \in C^s(\X)$, $\psi \in C^s(\Y)$, and 
    \begin{align}
        |\varphi_{u,\theta}(x)| \le \bar\phi, \label{eq:phi-bound} \\
        |\psi_{u,\theta}(y)| \le \bar\phi, \label{eq:psi-bound}
    \end{align}
    for all $x\in\X$ and $y\in\Y$.
\end{proposition}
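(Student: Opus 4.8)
The plan is to follow the route of Proposition A.1 in \citet{mena2019statistical}. Abbreviating $c(x,y):=u'\phi(x,y,\theta)$, which by \Cref{a:cost} is a bounded continuous function on the compact set $\X\times\Y$, I would first invoke the standard duality theory for entropic OT (the Schr\"odinger problem): since $c$ is bounded and $\mu,\nu$ are probability measures with compact support, the supremum defining $c(\mu)(u,\theta)$ is attained, uniquely up to the translation $(\varphi,\psi)\mapsto(\varphi+t,\psi-t)$, and any maximizing pair $(\varphi_{u,\theta},\psi_{u,\theta})$ satisfies the Schr\"odinger fixed-point equations
\[
\varphi(x)=-\log\int_\Y e^{\psi(y)-c(x,y)}\,d\nu(y),\qquad
\psi(y)=-\log\int_\X e^{\varphi(x)-c(x,y)}\,d\mu(x),
\]
which are exactly the stationarity conditions of the (strictly concave, modulo translation) dual objective. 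Everything else is extracted from these two identities.

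For the sup-norm bounds I would fix the free additive constant by normalizing $\int_\Y e^{\psi}\,d\nu=1$. Then the first equation together with $|c|\le\bar\phi$ gives $e^{-\varphi(x)}=\int_\Y e^{\psi(y)-c(x,y)}\,d\nu(y)\in[e^{-\bar\phi},e^{\bar\phi}]$, hence $|\varphi(x)|\le\bar\phi$ for all $x\in\X$. Substituting this into the second equation yields $\varphi(x)-c(x,y)\in[-2\bar\phi,2\bar\phi]$, so $e^{-\psi(y)}=\int_\X e^{\varphi(x)-c(x,y)}\,d\mu(x)\in[e^{-2\bar\phi},e^{2\bar\phi}]$ and $|\psi(y)|\le 2\bar\phi$. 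Since $\bar\phi$ is only required to be a finite constant dominating $\sup_{u\in\B,\theta\in\Theta}\sup_{(x,y)}|u'\phi(x,y,\theta)|$, one may take it at least twice that supremum, so that $|\psi|\le\bar\phi$ holds as well; alternatively, passing to the symmetric representative of $(\varphi,\psi)$ and using that each potential has oscillation at most $2\bar\phi$ (from the same ratio estimate applied to two points $x,x'$, resp.\ $y,y'$) delivers the stated bounds directly.

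The $C^s$ regularity is then a bootstrap off boundedness. With $\psi$ bounded and measurable, $x\mapsto\int_\Y e^{\psi(y)-c(x,y)}\,d\nu(y)$ is continuous by dominated convergence (dominating envelope $e^{\|\psi\|_\infty+\bar\phi}$) and bounded below by $e^{-\|\psi\|_\infty-\bar\phi}>0$, so $\varphi=-\log(\cdot)$ is continuous, and symmetrically so is $\psi$. To upgrade: by \Cref{a:cost} each $\phi_j\in C^s(\X\times\Y)$, so $x\mapsto c(x,y)$ has continuous, hence bounded, $\X$-derivatives up to order $s$ on the compact set $\X\times\Y$; the chain rule expresses the $\X$-derivatives of $e^{\psi(y)-c(x,y)}$ up to order $s$ as finite sums of products of such derivatives with $e^{\psi(y)-c(x,y)}$, all uniformly bounded, so the Leibniz rule for differentiation under the integral sign gives $x\mapsto\int_\Y e^{\psi(y)-c(x,y)}\,d\nu(y)\in C^s(\X)$, and composing with the $C^\infty$ map $t\mapsto-\log t$ on $(0,\infty)$ yields $\varphi_{u,\theta}\in C^s(\X)$. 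The same reasoning gives $\psi_{u,\theta}\in C^s(\Y)$.

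I expect the only genuine obstacle to be the first step: invoking entropic-OT duality so as to guarantee a maximizing pair together with the validity of the Schr\"odinger fixed-point equations (existence is standard but does require the duality machinery). Once those equations are in hand, the bounds in the second step and the regularity bootstrap in the third are routine, and the precise value of the sup-norm constant is immaterial for the downstream argument, since \Cref{prop:potentials-deriv-bounded} and the Donsker property only use a uniform $O(\bar\phi)$ bound.
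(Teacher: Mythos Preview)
Your proposal is correct and follows essentially the same route as the paper, with two minor differences worth noting. First, the paper does not start from a fixed point of the Schr\"odinger system: it takes an arbitrary optimal pair $(\varphi^0,\psi^0)$, applies \emph{one} Sinkhorn update to produce a new pair $(\varphi,\psi)$ (with $\varphi$ defined from $\psi^0$ and $\psi$ from $\varphi^0$), and then spends a separate Jensen-inequality argument to verify that this new pair remains optimal. Your approach short-circuits this by invoking the fixed-point characterization of optimal potentials directly, which is cleaner but relies on the same duality machinery you flag as the only real obstacle. Second, the paper's normalization ($\int\varphi^0\,d\mu=\int\psi^0\,d\nu=\tfrac12 c(\mu,\nu)\ge0$, after shifting the cost to be nonnegative) together with a preliminary Jensen bound on $\psi^0$ yields exactly $|\varphi|\le\bar\phi$ and $|\psi|\le\bar\phi$, whereas your normalization $\int e^\psi\,d\nu=1$ gives $|\varphi|\le\bar\phi$ but only $|\psi|\le 2\bar\phi$. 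As you correctly observe, this constant is immaterial for \Cref{prop:potentials-deriv-bounded} and the Donsker argument, so the discrepancy is cosmetic; your suggested fix of enlarging $\bar\phi$ is acceptable, though slightly at odds with the way $\bar\phi$ is introduced earlier in the paper as a specific supremum.
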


\begin{proof}
    Fix $u\in\B$ and $\theta\in\Theta$ and let $(\varphi_{u,\theta}^0,\psi_{u,\theta}^0)$ be any pair of optimal potentials. Assume without loss of generality that $u'\phi(\cdot,\cdot,\theta)\ge 0$. 
    Since $(\varphi_{u,\theta}^0-a,\psi_{u,\theta}^0+a)$ is also a pair of optimal potentials for any $a\in\R$, we can assume
    \[
    \int \varphi_{u,\theta}^0(x)\, d\mu(x) = \int \psi_{u,\theta}^0(y)\, d\nu(y) = \frac{1}{2} c(\mu,\nu)(u,\theta) \ge 0.
    \]
    Define
    \begin{align*}
        \varphi_{u,\theta}(x)= - \log \int e^{\psi_{u,\theta}^0(y)-u'\phi(x,y,\theta)} \, d\nu(y), \\
        \psi_{u,\theta}(y) = - \log \int e^{\varphi_{u,\theta}^0(x)-u'\phi(x,y,\theta)} \, d\mu(x).
    \end{align*}
    Jensen's inequality combined with $\int \psi_{u,\theta}^0(y)\, d\nu(y) \ge 0$ yield
    \begin{align*}
        \varphi_{u,\theta}(x) \le - \int (\psi_{u,\theta}^0(y)-u'\phi(x,y,\theta))\, d\nu(y) \le \int u'\phi(x,y,\theta)\, d\nu(y) \le \bar \phi.
    \end{align*}
    To establish the lower bound on $\varphi_{u,\theta}(x)$, notice that, by Jensen's inequality,
    \begin{align*}
        \psi_{u,\theta}^0(y) &= - \log \int e^{\varphi_{u,\theta}^0(x)-u'\phi(x,y,\theta)} \, d\mu(x) \\
        &\le - \int \varphi_{u,\theta}^0(x)\, d\mu(x) + \int u'\phi(x,y,\theta) \, d\mu(x) \\
        &\le \int u'\phi(x,y,\theta) \, d\mu(x).
    \end{align*}
    Therefore,
    \begin{align*}
        \exp\{\psi_{u,\theta}^0(y) - u'\phi(x,y,\theta)\} \le \exp\left\{\int u'\phi(x,y,\theta) \, d\mu(x) - u'\phi(x,y,\theta)\right\} \le \exp\{\bar\phi\}.
    \end{align*}
    Integrating this inequality w.r.t.\ $\nu$ and taking $-\log$ yields
    \begin{align*}
        \varphi_{u,\theta}(x) \ge - \bar \phi.
    \end{align*}
    
    Notice that $\varphi_{u,\theta} \in C^s(\X)$ and $\psi_{u,\theta} \in C^s(\Y)$ by the dominated convergence theorem.
    
    It remains to show that $\varphi_{u,\theta},\psi_{u,\theta}$ are optimal potentials. By construction,
    \begin{align*}
        \int e^{\varphi_{u,\theta}(x)+\psi(y)-u'\phi(x,y,\theta)} \, d\mu(x) = 1 \text{ for all } y \in \Y.
    \end{align*}
    Moreover,
    \begin{align*}
        \int e^{\varphi_{u,\theta}(x)+\psi_{u,\theta}(y)-u'\phi(x,y,\theta)} \, d\mu(x) d\nu(y) &= \int e^{\varphi_{u,\theta}(x)+\psi_{u,\theta}^0(y)-u'\phi(x,y,\theta)} \, d\mu(x) d\nu(y) \\
        &= \int e^{\varphi_{u,\theta}^0(x)+\psi_{u,\theta}^0(y)-u'\phi(x,y,\theta)} \, d\mu(x) d\nu(y).
    \end{align*}
    By Jensen's inequality,
    \begin{align*}
        &\int (\varphi_{u,\theta}(x)-\varphi_{u,\theta}^0(x))\, d\mu(x) + \int (\psi_{u,\theta}(y)-\psi_{u,\theta}^0(y))\, d\nu(y) \\
        &\ge -\log \int e^{\varphi_{u,\theta}(x)-\varphi_{u,\theta}^0(x)} \, d\mu(x) -\log \int e^{\psi_{u,\theta}(y)-\psi_{u,\theta}^0(y)} \, d\nu(y) \\
        &= - \log \int e^{\varphi_{u,\theta}^0(x)+\psi_{u,\theta}^0(x)-u'\phi(x,y,\theta)} \, d\mu(x) d\nu(y) -
        \log \int e^{\varphi_{u,\theta}(x)+\psi_{u,\theta}^0(x)-u'\phi(x,y,\theta)} \, d\mu(x) d\nu(y) \\
        &=0.
    \end{align*}
    Since $(\varphi_{u,\theta}^0,\psi_{u,\theta}^0)$ maximizes the dual objective, so does $(\varphi_{u,\theta},\psi_{u,\theta})$. Therefore, $\varphi_{u,\theta},\psi_{u,\theta}$ are optimal potentials.
\end{proof}


The following result and its proof follow Proposition 1 in \citet{mena2019statistical}.

\begin{proposition}\label{prop:potentials-deriv-bounded}
    For any $u\in\B$ and $\theta\in\Theta$, there exist optimal dual potentials $\varphi_{u,\theta},\psi_{u,\theta}$ such that for any multi-index $\alpha \in \N_0^d$ of order $|\alpha|\le s$, 
    \begin{align}
        \left| \nabla^\alpha \varphi_{u,\theta}(x) \right| \le C_{s,d} \text{ for all } x\in\X, \label{eq:phi-deriv-bound} \\
        \left| \nabla^\alpha \psi_{u,\theta}(y) \right| \le C_{s,d} \text{ for all } y\in\Y, \label{eq:psi-deriv-bound}
    \end{align}
    where $C_{s,d}<\infty$ is a quantity that does not depend on $x,y,u,\theta$, but may depend on $s,d,\bar \phi$, and $\bar \phi_s$.
\end{proposition}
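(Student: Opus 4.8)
The plan is to differentiate the soft-min representation of the optimal potentials and propagate the bounds through Fa\`a di Bruno's formula. By (the construction in the proof of) \Cref{prop:potentials-bounded}, for each $u\in\B$ and $\theta\in\Theta$ one may fix bounded optimal potentials $\varphi^\flat_{u,\theta},\psi^\flat_{u,\theta}$ with $\|\varphi^\flat_{u,\theta}\|_\infty,\|\psi^\flat_{u,\theta}\|_\infty\le\bar\phi$, and then take as optimal potentials the pair
\begin{align*}
  \varphi_{u,\theta}(x) = -\log \int e^{\psi^\flat_{u,\theta}(y) - u'\phi(x,y,\theta)}\,d\nu(y), \qquad
  \psi_{u,\theta}(y) = -\log \int e^{\varphi^\flat_{u,\theta}(x) - u'\phi(x,y,\theta)}\,d\mu(x),
\end{align*}
which are again optimal and bounded by $\bar\phi$ by the Jensen argument in that proof. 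The key structural observation is that in the formula for $\varphi_{u,\theta}$ the function $\psi^\flat_{u,\theta}$ enters only through its values, not its $x$-derivatives, so the derivatives of $\varphi_{u,\theta}$ in $x$ will be controlled purely by the smoothness of the cost in $x$ (bounded uniformly by $\bar\phi_s$) and by a two-sided bound on the integral. By symmetry it suffices to treat $\varphi_{u,\theta}$.

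First I would record the zeroth-order estimate: writing $g_{u,\theta}(x) := \int e^{\psi^\flat_{u,\theta}(y) - u'\phi(x,y,\theta)}\,d\nu(y) = e^{-\varphi_{u,\theta}(x)}$, the bounds $|\psi^\flat_{u,\theta}|\le\bar\phi$ and $|u'\phi|\le\bar\phi$ give $e^{-2\bar\phi} \le g_{u,\theta}(x) \le e^{2\bar\phi}$ for all $x\in\X$, uniformly in $u,\theta$. Next, since $\phi\in C^s$ on the compact set $\X\times\Y$, differentiation under the integral sign is justified, and for any multi-index $\alpha\in\N_0^d$ with $|\alpha|\le s$, Fa\`a di Bruno applied to $x\mapsto e^{-u'\phi(x,y,\theta)}$ expresses $\nabla^\alpha_x e^{-u'\phi(x,y,\theta)}$ as $e^{-u'\phi(x,y,\theta)}$ times a polynomial, with combinatorial coefficients depending only on $|\alpha|$ and $d$, in the derivatives $\{\nabla^\beta_x (u'\phi)(x,y,\theta):|\beta|\le|\alpha|\}$, each bounded in absolute value by $\bar\phi_s$. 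Integrating against $e^{\psi^\flat_{u,\theta}(y)}\,d\nu(y)$ and using the upper bound on $g_{u,\theta}$ yields $|\nabla^\alpha g_{u,\theta}(x)| \le K_{|\alpha|,d}(\bar\phi_s)\,e^{2\bar\phi}$ for a constant $K_{|\alpha|,d}(\bar\phi_s)$ depending only on $|\alpha|,d,\bar\phi_s$.

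Then I would apply Fa\`a di Bruno a second time, to the outer map $g\mapsto-\log g$: $\nabla^\alpha\varphi_{u,\theta}=\nabla^\alpha(-\log g_{u,\theta})$ is a polynomial in $\{\nabla^\beta g_{u,\theta}:|\beta|\le|\alpha|\}$ divided by powers $g_{u,\theta}^{\,j}$ with $j\le|\alpha|$. Using the lower bound $g_{u,\theta}\ge e^{-2\bar\phi}$ (so $g_{u,\theta}^{-j}\le e^{2|\alpha|\bar\phi}$) together with the previous-step derivative bounds gives $|\nabla^\alpha\varphi_{u,\theta}(x)|\le C_{s,d}$ for all $x\in\X$, with $C_{s,d}$ depending only on $s,d,\bar\phi,\bar\phi_s$ but not on $x,u,\theta$; the analogous bound for $\psi_{u,\theta}$ follows by the same argument with $\mu$ and $\nu$ interchanged. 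Uniformity in $(u,\theta)$ is automatic since $\bar\phi$ and $\bar\phi_s$ are already uniform over $\B\times\Theta$.

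The computation is essentially bookkeeping; the two points requiring care are the justification of differentiation under the integral (immediate from the compact supports and the $C^s$ bound on $\phi$) and, if $s$ is not an integer so that the $C^s$ norm also controls a H\"older seminorm of the top-order derivative, the estimate on $|\nabla^{\lfloor s\rfloor}\varphi_{u,\theta}(x)-\nabla^{\lfloor s\rfloor}\varphi_{u,\theta}(x')|$. I would obtain this last bound by re-running the Fa\`a di Bruno decomposition on the difference and using that $\nabla^{\lfloor s\rfloor}_x(u'\phi)(\cdot,y,\theta)$ is $(s-\lfloor s\rfloor)$-H\"older with seminorm at most $\bar\phi_s$, uniformly in $y,u,\theta$; this is the only step that genuinely exploits fractional smoothness of $\phi$ rather than just boundedness of its derivatives, and I expect it to be the main (mild) obstacle.
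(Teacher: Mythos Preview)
Your proposal is correct and follows essentially the same route as the paper: take the soft-min form of the potentials from \Cref{prop:potentials-bounded}, then control derivatives via Fa\`a di Bruno. You make the two-step structure (inner application to $x\mapsto e^{-u'\phi(x,y,\theta)}$, outer application to $-\log g$) and the two-sided bound $e^{-2\bar\phi}\le g_{u,\theta}\le e^{2\bar\phi}$ explicit, which is slightly cleaner than the paper's more compressed single formula, but the argument is the same in substance. Your final paragraph about fractional $s$ is unnecessary here: the paper's $C^s$ norm is defined via $\max_{|\alpha|\le s}\sup|\nabla^\alpha f|$, so $s$ is an integer and no H\"older seminorm needs to be handled.
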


\begin{proof}
    Let $\varphi_{u,\theta}$ be a potential as in \Cref{prop:potentials-bounded}.

    Denote $k=|\alpha|$. By the multivariate Faà di Bruno's formula (see, e.g., Corollary 2.10 in \citet{constantine1996multivariate}),
    \begin{align}
        \nabla^\alpha \varphi_{u,\theta}(x)= \sum_{\beta_1+\dots + \beta_k =\alpha} \lambda_{\alpha,\beta_1,\dots,\beta_k} \prod_{j=1}^k \int e^{-\psi_{u,\theta}^0(y)} \nabla^{\beta_j} e^{-u'\phi(x,y,\theta)} \, d\nu(y), \label{eq:fdb1}
    \end{align}
    where the summation is over multi-indices $\beta_1,\dots,\beta_k \in \N_0^d$ such that $\beta_1+\dots+\beta_k=\alpha$ and $\lambda_{\alpha,\beta_1,\dots,\beta_k}$ are combinatorial quantities that only depend on $\alpha,\beta_1,\dots,\beta_k$.

    Now consider the expression $\nabla^\beta e^{-u'\phi(x,y,\theta)}$ for a multi-index $\beta$ of order $|\beta|\le s$. Again, by Faà di Bruno's formula,
    \begin{align*}
        \nabla^\beta e^{-u'\phi(x,y,\theta)} = \sum_{\gamma_1+\dots+\gamma_{|\beta|} = \beta} \nu_{\beta,\gamma_1,\dots,\gamma_{|\beta|}} \prod_{j=1}^{|\beta|} \nabla^{\gamma_j} [u'\phi(x,y,\theta)]
    \end{align*}
    for combinatorial quantities $\nu_{\beta,\gamma_1,\dots,\gamma_{|\beta|}}$ that only depend on $\beta,\gamma_1,\dots,\gamma_{|\beta|}$.
    Let $\preceq$ denote inequality up to a multiplicative constant that depends only on $s$ and $d$.
    Then
    \begin{align*}
        \left| \nabla^\beta e^{-u'\phi(x,y,\theta)} \right| \preceq \bar \phi_s^{|\beta|} \preceq 
        \max(\bar\phi_s, \bar \phi_s^{s}),
    \end{align*}
    where the last inequality holds because $|\beta|\le s$.

    Combining with \eqref{eq:fdb1} yields
    \begin{align*}
        \left|\nabla^\alpha \varphi_{u,\theta}(x)\right|  \preceq \sum_{\beta_1+\dots + \beta_k =\alpha} |\lambda_{\alpha,\beta_1,\dots,\beta_k}| \prod_{j=1}^k \bar \phi_s^s \int e^{-\psi_{u,\theta}^0(y)} \, d\nu(y) \preceq e^{ k \bar \phi} \bar \phi_s^{ks} \preceq e^{s \bar \phi} \max(\bar \phi_s^s, \bar \phi_s^{s^2}),
    \end{align*}
    where the last inequality holds because $k\le s$.
    The proof for the potential $\psi_{u,\theta}$ is analogous.
\end{proof}


The following proposition and its proof follow Lemma E.23 of \citet{goldfeld2024statistical}.

\begin{proposition}\label{prop:hadamard}
The mapping $\mu \mapsto c(\mu)$ is Lipschitz continuous and Gateaux differentiable, i.e., for any probability measures $\mu_0,\mu_1$ supported in $\mathcal{X}$, we have
\begin{align}
    \|c(\mu_1)-c(\mu_0)\|_{\B \times\Theta} \le \|\mu_1-\mu_0\|_{\F} \label{eq:c-Lipschitz}
\end{align}
and
\begin{align}
    \lim_{t\downarrow 0} \frac{c(\mu_0+t(\mu_1-\mu_0))(u,\theta)-c(\mu_0)(u,\theta)}{t} = \int \varphi_{u,\theta} \,d(\mu_1-\mu_0), \label{eq:c-gateaux}
\end{align}
where $\varphi_{u,\theta}$ is an optimal potential for $\mu_0$ and the cost function $u'\phi(\cdot,\cdot,\theta)$.
\end{proposition}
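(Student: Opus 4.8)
The plan is to exploit the variational representation defining $c(\mu)$: since $c(\mu)(u,\theta)$ is a supremum over potential pairs $(\varphi,\psi)$, plugging in any fixed pair yields a lower bound, while evaluating at a genuinely optimal pair gives equality. By \Cref{prop:potentials-bounded} and \Cref{prop:potentials-deriv-bounded} there exist optimal potentials $\varphi_{u,\theta},\psi_{u,\theta}$ that are bounded and whose derivatives up to order $s$ are bounded by a constant independent of $(u,\theta)$; after replacing them by their canonical softmax versions, $\varphi_{u,\theta}(x)=-\log\int e^{\psi_{u,\theta}(y)-u'\phi(x,y,\theta)}\,d\nu(y)$, they satisfy the marginal (Schr\"odinger) identity $\int e^{\varphi_{u,\theta}\oplus\psi_{u,\theta}-u'\phi(\cdot,\cdot,\theta)}\,d\nu=1$ \emph{pointwise on all of} $\X$, and each such $\varphi_{u,\theta}$ lies in the H\"older ball $\F$ used in the proof of \Cref{thm:UCLT}. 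The two facts that drive both claims are these suboptimality bounds and the observation that, since $\mu_1-\mu_0$ has total mass zero, integrating the Schr\"odinger identity against it makes the exponential term vanish.

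For the Lipschitz bound \eqref{eq:c-Lipschitz}, fix $(u,\theta)$, let $(\varphi_0,\psi_0)$ be canonical optimal potentials for $\mu_0$, and use them as a feasible point in the supremum defining $c(\mu_1)(u,\theta)$, getting
\[
c(\mu_1)(u,\theta)-c(\mu_0)(u,\theta) \ge \int \varphi_0\,d(\mu_1-\mu_0) - \int e^{\varphi_0\oplus\psi_0-u'\phi(\cdot,\cdot,\theta)}\,d(\mu_1-\mu_0)\otimes\nu = \int \varphi_0\,d(\mu_1-\mu_0),
\]
the second term vanishing by the Schr\"odinger identity and the mass-zero property. Reversing the roles of $\mu_0$ and $\mu_1$ with $(\varphi_1,\psi_1)$ optimal for $\mu_1$ gives the matching upper bound $c(\mu_1)(u,\theta)-c(\mu_0)(u,\theta)\le\int\varphi_1\,d(\mu_1-\mu_0)$. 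Since $\varphi_0,\varphi_1\in\F$, both bounds are dominated in absolute value by $\sup_{f\in\F}|\int f\,d(\mu_1-\mu_0)|=\|\mu_1-\mu_0\|_{\F}$, and taking the supremum over $(u,\theta)\in\B\times\Theta$ yields \eqref{eq:c-Lipschitz}.

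For Gateaux differentiability \eqref{eq:c-gateaux}, apply the same sandwich to $\mu_t:=\mu_0+t(\mu_1-\mu_0)$: with $(\varphi_t,\psi_t)$ optimal for $\mu_t$,
\[
\int \varphi_0\,d(\mu_1-\mu_0) \le \frac{c(\mu_t)(u,\theta)-c(\mu_0)(u,\theta)}{t} \le \int \varphi_t\,d(\mu_1-\mu_0),
\]
so it remains to show $\int\varphi_t\,d(\mu_1-\mu_0)\to\int\varphi_0\,d(\mu_1-\mu_0)$ as $t\downarrow0$. The uniform $C^s$ bound of \Cref{prop:potentials-deriv-bounded} makes $\{\varphi_t\}_{t\in(0,1]}$ relatively compact in $C(\X)$ by Arzel\`a--Ascoli; along any subsequence $t_k\downarrow0$ pass to a uniform limit $\bar\varphi$. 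Since $\mu_{t_k}\to\mu_0$ and the dual objective is jointly continuous in the measure and the potential (uniformly, using the bounds on potentials and on the cost), $\bar\varphi$, paired with the corresponding limit of $\psi_{t_k}$, is an optimal potential for $\mu_0$. Entropic optimal potentials are unique up to a common additive constant, and that constant drops out of $\int(\cdot)\,d(\mu_1-\mu_0)$ because $\mu_1-\mu_0$ has total mass zero; hence $\int\bar\varphi\,d(\mu_1-\mu_0)=\int\varphi_0\,d(\mu_1-\mu_0)$. As every subsequential limit agrees, the full limit exists and equals $\int\varphi_0\,d(\mu_1-\mu_0)$, giving \eqref{eq:c-gateaux}.

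The main obstacle is the last step, the stability of the optimal potentials as $\mu_t\to\mu_0$; it is resolved by combining the compactness furnished by the uniform derivative bounds with uniqueness of entropic potentials modulo additive constants, the latter being exactly what lets the additive-constant ambiguity disappear against the mass-zero increment $\mu_1-\mu_0$. A secondary, more routine point is the reduction to canonical softmax potentials, needed so that the Schr\"odinger marginal identity holds pointwise on all of $\X$ rather than merely $\mu_0$-almost everywhere, which is what makes it legitimate to integrate it against $\mu_1$.
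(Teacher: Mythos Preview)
Your proposal is correct and follows essentially the same approach as the paper: the same suboptimality/optimality sandwich using feasible versus optimal potentials, the pointwise Schr\"odinger identity to kill the exponential term against the mass-zero increment, and the Arzel\`a--Ascoli compactness plus uniqueness-up-to-constant argument to pass to the limit in $\int\varphi_t\,d(\mu_1-\mu_0)$. The only cosmetic difference is that the paper obtains the upper bound on $c(\mu_t)-c(\mu_0)$ by evaluating the $\mu_t$-optimal potentials in the $\mu_0$-dual (yielding $\int\varphi^t\,d(\mu_1-\mu_0)$ directly for all $t$), whereas you obtain the Lipschitz upper bound by swapping the roles of $\mu_0$ and $\mu_1$; both arguments coincide at $t=1$ and lead to the same sandwich for the Gateaux limit.
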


\begin{proof}
    We first prove \eqref{eq:c-Lipschitz}.
    Let $\mu_t = \mu+t(\mu_1-\mu_0)$ for $t\in[0,1]$ and let $\varphi_{u,\theta}^t$, $\psi_{u,\theta}^t$ be optimal potentials for $\mu_t$ and the cost function $u'\phi(\cdot,\cdot,\theta)$.
    Since $\mu_t$ is concentrated in $\mathcal{X}$, we can choose these potentials to satisfy the derivative bounds.
    Notice that
    \begin{align}
        \notag
        c(\mu_t)(u,\theta)&=\int \varphi_{u,\theta}^t \, d\mu_t + \int \psi_{u,\theta}^t \, d\nu \ge
        \int \varphi_{u,\theta}^0 \, d\mu_t + \int \psi_{u,\theta}^0 \, d\nu - \int e^{\varphi_{u,\theta}^0 \oplus \psi_{u,\theta}^0 - u'\phi(\cdot,\cdot,\theta) } \, d\mu_t \otimes \nu + 1\\
        \notag
        &= \int \varphi_{u,\theta}^0 \, d\mu_t + \int \psi_{u,\theta}^0 \, d\nu = \int \varphi_{u,\theta}^0 \, d\mu_0 + \int \psi_{u,\theta}^0 \, d\nu + t \int \varphi_{u,\theta}^0 \, d(\mu_1-\mu_0) \\
        &= c(\mu_0)(u,\theta) + t \int \varphi_{u,\theta}^0 \, d(\mu_1-\mu_0), \label{eq:theta-lower}
    \end{align}
    where the second equality uses the fact that $\int e^{\varphi_{u,\theta}^0(x)+\psi_{u,\theta}^0(y)-u'\phi(x,y,\theta)} d\nu(y)=1$ for all $x\in \mathcal{X}$.
    Hence,
    \begin{align*}
        \liminf_{t \downarrow 0} \frac{c(\mu_t)(u,\theta)-c(\mu_0)(u,\theta)}{t} \ge \int \varphi_{u,\theta}^0 \, d(\mu_1-\mu_0).
    \end{align*}
    Similarly, we have
    \begin{align}
        \notag
        c(\mu_t)(u,\theta)&=\int \varphi_{u,\theta}^t \, d\mu_t + \int \psi_{u,\theta}^t \, d\nu = \int \varphi_{u,\theta}^t \, d\mu_0 + \int \psi_{u,\theta}^t \, d\nu + t \int \varphi_{u,\theta}^t \, d(\mu_1-\mu_0) \\
        \notag
        &\le \int \varphi_{u,\theta}^0 \, d\mu_0 + \int \psi_{u,\theta}^0 \, d\nu + t \int \varphi_{u,\theta}^t \, d(\mu_1-\mu_0) + \int e^{\varphi_{u,\theta}^t \oplus \psi_{u,\theta}^t - u'\phi(\cdot,\cdot,\theta)} \,d\mu_0 \otimes \nu - 1 \\
        \notag
        &= \int \varphi_{u,\theta}^0 \, d\mu_0 + \int \psi_{u,\theta}^0 \, d\nu + t \int \varphi_{u,\theta}^t \, d(\mu_1-\mu_0) \\
        &= c(\mu_0)(u,\theta) + t \int \varphi_{u,\theta}^t \, d(\mu_1-\mu_0), \label{eq:theta-upper}
    \end{align}
    where the second equality uses the fact that $\int e^{\varphi_{u,\theta}^0(x)+\psi_{u,\theta}^0(y)-u'\phi(x,y,\theta)} d\mu_0(x)=1$ for all $y \in \mathcal{Y}$.
    Hence,
    \begin{align*}
        \frac{c(\mu_t)(u,\theta)-c(\mu_0)(u,\theta)}{t} \le \int \varphi_{u,\theta}^t \, d(\mu_1-\mu_0).
    \end{align*}
    It suffices to show that for any sequence $t_n \downarrow 0$,
    \begin{align}
        \lim_{n \to \infty} \int \varphi_{u,\theta}^{t_n} \, d(\mu_1-\mu_0) = \int \varphi_{u,\theta}^0 \, d(\mu_1-\mu_0). \label{eq:lim1}
    \end{align}
    Pick any subsequence $n'$ of $n$. From \eqref{eq:phi-deriv-bound} and Arzela-Ascoli theorem, there exists a further subsequence $n''$ such that $\varphi_{u,\theta}^{t_{n''}} \to \varphi_{u,\theta}$ and $\psi_{u,\theta}^{t_{n''}} \to \psi_{u,\theta}$ locally uniformly for some continuous functions $\varphi_{u,\theta},\psi_{u,\theta}$.
    Again, from \eqref{eq:phi-deriv-bound} and the dominated convergence theorem, $(\varphi_{u,\theta},\psi_{u,\theta})$ satisfy the first-order conditions, and hence they are optimal potentials for $(\mu_0,\nu)$ and the cost function $u'\phi(\cdot,\cdot,\theta)$.
    Let us now verify that $\varphi_{u,\theta}=\varphi_{u,\theta}^0+a$ for some constant $a\in \R$.
    By uniqueness of optimal potentials, $\psi_{u,\theta}=\psi_{u,\theta}^0-a$ for some $a\in \R$. Then,
    \begin{align*}
        \varphi_{u,\theta}(x)=-\log \int e^{\psi_{u,\theta}(y)-u'\phi(x,y,\theta)} \, d\nu(y)=-\log \int e^{\psi_{u,\theta}^0(y)-u'\phi(x,y,\theta)} \, d\nu(y) + a = \varphi_{u,\theta}^0(x)+a.
    \end{align*}
    Therefore, by \eqref{eq:phi-deriv-bound} and the dominated convergence theorem,
    \begin{align*}
        \lim_{n'' \to 0} \int \varphi_{u,\theta}^{t_{n''}} \, d(\mu_1-\mu_0) = \int \varphi_{u,\theta} \, d(\mu_1-\mu_0) = \int \varphi_{u,\theta}^0\, d(\mu_1-\mu_0).
    \end{align*}
    Since the limit does not depend on the choice of the subsequence, this establishes \eqref{eq:lim1}, completing the proof of \eqref{eq:c-gateaux}.

    Next, we show \eqref{eq:c-Lipschitz}.
    From the inequalities \eqref{eq:theta-lower} and \eqref{eq:theta-upper}, we have
    \begin{align*}
        \left|c(\mu_1)(u,\theta)-c(\mu_0)(u,\theta) \right| \le \max\left( \int \varphi_{u,\theta}^0\, d(\mu_1-\mu_0), \, \int \varphi_{u,\theta}^1\, d(\mu_1-\mu_0)  \right) \le \|\mu_1-\mu_0\|_{\F},
    \end{align*}
    where the last inequality is due to $\varphi_{u,\theta}^0,\varphi_{u,\theta}^1 \in \F$.
    Since $\F$ is independent of $(u,\theta)$, the proof is completed.
\end{proof}

\begin{proposition}\label{prop:Donsker-joint}
    There exists a tight Gaussian process $\G_{\mu\otimes \nu}$ in $\ell^\infty(\mathscr{F}^\oplus)$ such that
    \begin{align*}
        \sqrt n(\hat \mu_n \otimes \hat\nu_n- \mu \otimes \nu) \weakto \G_{\mu\otimes \nu} \text{ in } \ell^\infty(\mathscr{F}^\oplus).
    \end{align*}
\end{proposition}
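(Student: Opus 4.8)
The plan is to exploit the additive structure of $\mathscr{F}^\oplus$ to reduce the claim to the one-sample Donsker theorem for smooth function classes applied separately on $\X$ and on $\Y$; this is precisely why the regularity $s>d/2$ (rather than $s>d$) is enough, since we never invoke a Donsker class on the $2d$-dimensional product $\X\times\Y$. First I would record an exact identity: for $f=\varphi\oplus\psi\in\mathscr{F}^\oplus$, since $\hat\mu_n,\hat\nu_n,\mu,\nu$ are probability measures,
\begin{align*}
\sqrt{n}\,\bigl(\hat\mu_n\otimes\hat\nu_n-\mu\otimes\nu\bigr)(f)=\G_n^\mu(\varphi)+\G_n^\nu(\psi),\qquad \G_n^\mu:=\sqrt{n}(\hat\mu_n-\mu),\quad \G_n^\nu:=\sqrt{n}(\hat\nu_n-\nu),
\end{align*}
and, because $\hat\mu_n-\mu$ and $\hat\nu_n-\nu$ each integrate constants to zero, the right-hand side does not depend on the (non-unique) choice of decomposition of $f$ (two decompositions differ by adding a constant $a$ to $\varphi$ and subtracting it from $\psi$, and $\G_n^\mu(1)=\G_n^\nu(1)=0$). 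Thus the process $\sqrt{n}(\hat\mu_n\otimes\hat\nu_n-\mu\otimes\nu)$ on $\mathscr{F}^\oplus$ is entirely determined by the pair $(\G_n^\mu,\G_n^\nu)$.

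Second, I would invoke the smooth-class Donsker theorem on each marginal space. The balls $\mathscr{F}_\X:=\{\varphi\in C^s(\X):\|\varphi\|_{C^s(\X)}\le 4C_{s,d}\}$ and $\mathscr{F}_\Y$ (defined analogously on $\Y$) are $\mu$- and $\nu$-Donsker, respectively, since $s>d/2$ (Corollary 2.7.2 in \citet{vaart2023empirical}); the radius $4C_{s,d}$ is an arbitrary fixed constant chosen for convenience in the next step. Hence $\G_n^\mu\weakto\G_\mu$ in $\ell^\infty(\mathscr{F}_\X)$ and $\G_n^\nu\weakto\G_\nu$ in $\ell^\infty(\mathscr{F}_\Y)$, with tight Gaussian limits, and by independence of the two samples (\Cref{a:sampling}) the pair converges jointly, $(\G_n^\mu,\G_n^\nu)\weakto(\G_\mu,\G_\nu)$ in $\ell^\infty(\mathscr{F}_\X)\times\ell^\infty(\mathscr{F}_\Y)$, with $\G_\mu$ independent of $\G_\nu$.

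Third, I would transfer this to $\ell^\infty(\mathscr{F}^\oplus)$ by the continuous mapping theorem. Fix reference points $x_0\in\X$, $y_0\in\Y$ and use the canonical, linear decomposition $\varphi_f(x):=f(x,y_0)-\tfrac12 f(x_0,y_0)$, $\psi_f(y):=f(x_0,y)-\tfrac12 f(x_0,y_0)$, which satisfies $\varphi_f\oplus\psi_f=f$ and, since every element of $\mathscr{F}^\oplus$ lies in a $C^s(\X\times\Y)$-ball of radius $2C_{s,d}$, maps $\mathscr{F}^\oplus$ into $\mathscr{F}_\X$ and $\mathscr{F}_\Y$. The map $\Sigma:\ell^\infty(\mathscr{F}_\X)\times\ell^\infty(\mathscr{F}_\Y)\to\ell^\infty(\mathscr{F}^\oplus)$ given by $(\alpha,\beta)\mapsto\bigl(f\mapsto\alpha(\varphi_f)+\beta(\psi_f)\bigr)$ is linear and bounded, hence continuous, because $\|\Sigma(\alpha,\beta)\|_{\mathscr{F}^\oplus}\le\|\alpha\|_{\mathscr{F}_\X}+\|\beta\|_{\mathscr{F}_\Y}$. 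By Step~1, $\sqrt{n}(\hat\mu_n\otimes\hat\nu_n-\mu\otimes\nu)=\Sigma(\G_n^\mu,\G_n^\nu)$ as elements of $\ell^\infty(\mathscr{F}^\oplus)$, so the continuous mapping theorem and Step~2 yield $\sqrt{n}(\hat\mu_n\otimes\hat\nu_n-\mu\otimes\nu)\weakto\G_{\mu\otimes\nu}:=\Sigma(\G_\mu,\G_\nu)$, which is tight and Gaussian as the image of a tight Gaussian element under a continuous linear map (this also exhibits $\G_{\mu\otimes\nu}(\varphi\oplus\psi)=\G_\mu(\varphi)+\G_\nu(\psi)$, consistent with the variance decomposition used elsewhere).

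The main obstacle — and essentially the only nonroutine point — is the non-uniqueness of the representation $f=\varphi\oplus\psi$, which makes the naive evaluation map ill-defined on $\mathscr{F}^\oplus$; it is handled by noting in Step~1 that the relevant functionals annihilate constants (so their value is representation-free) and, for the continuous-mapping step, by committing to the canonical linear section $f\mapsto(\varphi_f,\psi_f)$ above. Alternatively, one may restrict $\Sigma$ to the closed subspace of functionals annihilating constants, on which it is well-defined independently of the decomposition and which contains both $\G_n^\mu$ and $\G_\mu$. Everything else is the standard pairing of the smooth-class Donsker theorem with the independence of the samples.
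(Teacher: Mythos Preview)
Your argument is correct. The paper does not give a self-contained proof here but simply refers to the proof of Theorem~1(ii) in \citet{goldfeld2024statistical}; your additive-decomposition route---reducing to two independent one-sample Donsker statements on the $d$-dimensional marginal spaces via the identity $(\hat\mu_n\otimes\hat\nu_n-\mu\otimes\nu)(\varphi\oplus\psi)=(\hat\mu_n-\mu)(\varphi)+(\hat\nu_n-\nu)(\psi)$ and then pushing forward by a bounded linear map---is precisely the mechanism behind that cited argument, and your observation that $s>d/2$ suffices because Donskerness is needed only on $\X$ and $\Y$ separately (never on the $2d$-dimensional product) is the key point. The care you take with the non-uniqueness of the $\oplus$-representation (handled either by noting the empirical processes annihilate constants or by fixing the canonical section $f\mapsto(\varphi_f,\psi_f)$) is appropriate and resolves the only genuinely delicate step.
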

\begin{proof}
    See the proof of part (ii) of Theorem 1 in \citet{goldfeld2024statistical}.
\end{proof}

\subsection{Proof of \Cref{cor:D-distribution}}\label{app:D-distribution}

Theorem 3.1 in \citet{shapiro1991asymptotic} implies that the functional $\chi(c)=\max_{u\in\B} c_\theta(u)$ is Hadamard directionally differentiable with the derivative 
\begin{align*}
    \chi_c'(h) = \max_{u\in U_c} h(u), \quad h\in C(\B),
\end{align*}
where 
\begin{align*}
    U_c = \arg\max_{u\in \B} c_\theta(u).
\end{align*}
Applying the functional delta method \citep[e.g.,][Theorem 3.10.5]{vaart2023empirical} to \Cref{thm:UCLT} completes the proof.

\section{Details for panel logit with attrition and refreshment}\label{app:logit-details}

\subsection{Common slope parameter}\label{app:logit-beta}

For the fixed effects panel logit with $T=2$, the conditional log-likelihood for individuals with $S_i = Y_{i1}+Y_{i2}=1$ is
\begin{align*}
\ell_i(\theta \mid S_i=1) 
= Y_{i1}X_{i1}'\theta + Y_{i2}X_{i2}'\theta - \ln\left(e^{X_{i1}'\theta}+e^{X_{i2}'\theta}\right),
\end{align*}
with score
\begin{align*}
s(Y_{i1},Y_{i2},X_{i1},X_{i2};\theta) 
&= \left(Y_{i1}X_{i1}+Y_{i2}X_{i2}\right) 
- \frac{e^{X_{i1}'\theta}X_{i1}+e^{X_{i2}'\theta}X_{i2}}{e^{X_{i1}'\theta}+e^{X_{i2}'\theta}},
\end{align*}
and moment condition
\begin{align*}
\E\left[s(Y_{i1},Y_{i2},X_{i1},X_{i2};\theta_0) \mid S_i=1\right] = 0.
\end{align*}
We embed the event $\{Y_{i1}+Y_{i2}=1\}$ in the cost function
\begin{align*}
\phi\left(y_1,y_2,x_1,x_2;\theta\right) 
= s(y_1,y_2,x_1,x_2;\theta)\mathbf{1}\left\{y_1+y_2=1\right\}.
\end{align*}

We partition the population into retainers (observed in both periods) and attriters (observed only in period 1). This partitioned approach yields tighter bounds by fixing the known joint distribution of retainers and limiting the OT problem to the attriters only. Let $p$ denote the retention rate. We use the following notations.
\begin{itemize}
\item $f_1(y,x)$: distribution of all units in period 1,
\item $f_{1\mid\text{ret}}(y,x)$: distributions of retainers in period 1,
\item $f_{1\mid\text{att}}(y,x)$: distributions of attriters in period 1,
\item $f_{2|\text{ref}}(y,x)=f_2(y,x)$: distribution of the refreshment sample in period 2, which equals the unconditional distribution in period 2 since the refreshment sample is drawn from the same population as the original sample,
\item $f_{2\mid\text{ret}}(y,x)$: distribution of retainers in period 2,
\item $f_{1,2\mid\text{ret}}(y_1,y_2,x_1,x_2)$: joint distribution of retainers for both periods,
\item $f_{2\mid\text{att}}(y,x)$: distribution of attriters in period 2 (unobserved).
\end{itemize}
By the law of total probability,
\begin{align*}
f_1 &= p \cdot f_{1\mid\text{ret}} + (1-p) \cdot f_{1\mid\text{att}}, \\
f_2 &= p \cdot f_{2\mid\text{ret}} + (1-p) \cdot f_{2\mid\text{att}},
\end{align*}
so the unobserved attriter distribution in period 2 is
\begin{align*}
f_{2\mid\text{att}} 
= \frac{f_2 - p \cdot f_{2\mid\text{ret}}}{1-p}. 
\end{align*}

Our OT problem couples only the attriter distributions $f_{1\mid\text{att}}$ and $f_{2\mid\text{att}}$, while the joint distribution of retainers $f_{1,2\mid\text{ret}}$ is fixed at its observed value. Let $\Pi(f_{1\mid\text{att}}, f_{2\mid\text{att}})$ denote the set of all joint distributions with these marginals. The attriter contribution to the moment bounds is
\begin{align*}
\underline{\nu}_{\text{att}}(\theta) 
&= \inf_{f \in \Pi(f_{1\mid\text{att}}, f_{2\mid\text{att}})} \E_f[\phi(Y_1, Y_2, X_1, X_2; \theta)], \\
\overline{\nu}_{\text{att}}(\theta) 
&= \sup_{f \in \Pi(f_{1\mid\text{att}}, f_{2\mid\text{att}})} \E_f[\phi(Y_1, Y_2, X_1, X_2; \theta)],
\end{align*}
and the overall bounds are obtained by combining the contributions from retainers and attriters
\begin{align*}
\underline{\nu}(\theta) 
&= p \cdot \E_{f_{1,2\mid\text{ret}}}[\phi(Y_1,Y_2,X_1,X_2;\theta)] + (1-p) \cdot \underline{\nu}_{\text{att}}(\theta), \\
\overline{\nu}(\theta) 
&= p \cdot \E_{f_{1,2\mid\text{ret}}}[\phi(Y_1,Y_2,X_1,X_2;\theta)] + (1-p) \cdot \overline{\nu}_{\text{att}}(\theta).
\end{align*}

Note that given the indicator function $\mathbf{1}\{Y_1 + Y_2 = 1\}$ in the cost function $\phi(y_1,y_2,x_1,x_2;\theta)$, the OT coupling allocates maximal mass to zero-cost non-switchers at each covariate value $x$,
\begin{align*}
w_{00}(x) &= \min\left\{f_{1\mid\mathrm{att}}(0, x), f_{2\mid\mathrm{att}}(0, x)\right\}, \\
w_{11}(x) &= \min\left\{f_{1\mid\mathrm{att}}(1, x), f_{2\mid\mathrm{att}}(1, x)\right\}.
\end{align*}
The remaining mass on the informative switcher pairs $(y_1,y_2) = (0,1)$ or $(1,0)$ is
\begin{align*}
1 - w_{00}(x) - w_{11}(x) = \left|f_{1\mid\mathrm{att}}(1, x) - f_{2\mid\mathrm{att}}(1, x)\right|.
\end{align*}
If the marginal distributions are identical across periods, this quantity is zero, meaning no switchers exist and $\theta$ cannot be identified under unrestricted attrition.

\Cref{alg:fe-logit-bounds} presents the algorithm for computing bounds on the common slope parameter $\theta$. In line~4, we estimate $\hat f_{2|{\rm ret}}$ using a nonparametric kernel estimator since it must be evaluated at the refreshment sample points during the OT computation in lines~10--11, where $k(\cdot)$ is a kernel function and $h$ is the bandwidth. Alternative nonparametric estimators, such as sieves or splines, could be employed as well. Intuitively, the OT coupling in lines~10--11 effectively reweights the refreshment sample $\{(Y_{j2}, X_{j2})\}$ to approximate the unobserved attriter distribution $\hat f_{2\mid\mathrm{att}}$.  

\begin{algorithm}[tp]
\caption{Panel logit with attrition and refreshment: common slope parameter}
\label{alg:fe-logit-bounds}
\begin{algorithmic}[1]
\Require Original panel at $t=1$: $\{(Y_{i1},X_{i1})\}_{i=1}^{n_{\rm org}}$; Retainers' sample at $t=2$: $\{(Y_{i2},X_{i2})\}_{i\in\mathcal S}$; Refreshment sample at $t=2$: $\{(Y_{j2},X_{j2})\}_{j=1}^{n_{\rm ref}}$
\Ensure Identified set $\widehat\Theta_I$ for parameter $\theta$

\smallskip\State $n_{\rm ret} \leftarrow |\mathcal S|$, $\hat p \leftarrow n_{\rm ret}/n_{\rm org}$

\smallskip\Statex  \textit{Estimate empirical marginals:}
\State $\hat f_{1|{\rm ret}}(y,x) \leftarrow \frac{1}{n_{\rm ret}} \sum_{i\in\mathcal S}\mathbf1\{Y_{i1}=y,X_{i1}=x\}$
\State $\hat f_{1|{\rm att}}(y,x) \leftarrow \frac{1}{n_{\rm org}-n_{\rm ret}} \sum_{i\notin\mathcal S}\mathbf1\{Y_{i1}=y,X_{i1}=x\}$
\State $\hat f_{2|{\rm ret}}(y,x) \leftarrow \frac{1}{n_{\rm ret}h^d} \sum_{i\in\mathcal S}\mathbf1\{Y_{i2}=y\} k\left(\frac{X_{i2}-x}{h}\right)$
\State $\hat f_{2}(y,x) \leftarrow \frac{1}{n_{\rm ref}} \sum_{j=1}^{n_{\rm ref}}\mathbf1\{Y_{j2}=y,X_{j2}=x\}$

\smallskip\Statex  \textit{Recover attriter marginal:}
\State $\hat f_{2|{\rm att}}(y,x) \leftarrow \frac{\hat f_{2}(y,x) - \hat p\hat f_{2|{\rm ret}}(y,x)}{1-\hat p}$

\smallskip\Statex  \textit{Construct cost matrix:}
\For{each $i\notin\mathcal S$, $j=1,\dots,n_{\rm ref}$}
    \State $\phi_{ij}(\theta) \leftarrow s(Y_{i1},Y_{j2},X_{i1},X_{j2};\theta)\mathbf1\{Y_{i1}+Y_{j2}=1\}$
\EndFor

\smallskip\Statex  \textit{Solve entropic OT problems:}
\State $\widehat{\underline\nu}_{\rm att}(\theta) \leftarrow \min_{w\ge0}\sum_{i,j}w_{ij}\phi_{ij}(\theta)+\varepsilon\sum_{i,j}w_{ij}\log\frac{w_{ij}}{\hat f_{1|{\rm att}}(Y_{i1},X_{i1})\hat f_{2|{\rm att}}(Y_{j2},X_{j2})}$ subject to constraints
\State $\widehat{\overline\nu}_{\rm att}(\theta) \leftarrow \max_{w\ge0}\sum_{i,j}w_{ij}\phi_{ij}(\theta)+\varepsilon\sum_{i,j}w_{ij}\log\frac{w_{ij}}{\hat f_{1|{\rm att}}(Y_{i1},X_{i1})\hat f_{2|{\rm att}}(Y_{j2},X_{j2})}$ subject to constraints
\Statex where constraints are $\sum_{j}w_{ij}=\hat f_{1|{\rm att}}(Y_{i1},X_{i1})$, $\sum_{i}w_{ij}=\hat f_{2|{\rm att}}(Y_{j2},X_{j2})$

\smallskip\Statex  \textit{Compute retainer moment:}
\State $\widehat{\nu}_{\rm ret}(\theta) \leftarrow \frac{1}{n_{\rm ret}} \sum_{i\in\mathcal S} \phi(Y_{i1},Y_{i2},X_{i1},X_{i2};\theta)$

\smallskip\Statex  \textit{Combine bounds:}
\State $\widehat{\underline\nu}(\theta) \leftarrow \hat p\widehat{\nu}_{\rm ret}(\theta) + (1-\hat p) \widehat{\underline\nu}_{\rm att}(\theta)$
\State $\widehat{\overline\nu}(\theta) \leftarrow \hat p\widehat{\nu}_{\rm ret}(\theta) + (1-\hat p) \widehat{\overline\nu}_{\rm att}(\theta)$

\smallskip\Statex  \textit{Construct identified set:}
\State $\widehat\Theta_I \leftarrow \{\theta:\widehat{\underline\nu}(\theta)\le0\le\widehat{\overline\nu}(\theta)\}$

\smallskip\State \Return $\widehat\Theta_I$
\end{algorithmic}
\end{algorithm}

\subsection{AME}\label{app:logit-ame}

\subsubsection{Without attrition}\label{app:logit-ame-no-attrition} \citet{davezies2021identification} use Chebyshev polynomial approximation to construct outer bounds on the AME.\footnote{A sharper bound can be obtained via Hankel moment matrix positivity, but it involves nonparametric first-step estimation.}
Let $X = (X_1, \ldots, X_T)$ denote the full stack of period-specific covariates and $S_i = \sum_{t=1}^T Y_{it}$. The AME of covariate $j$ at period $\tau$ is
\[
\delta_{\tau,j} 
= \theta_j\E[\Lambda(X_{\tau}'\theta + \alpha)(1-\Lambda(X_{\tau}'\theta + \alpha))].
\]
The outer bounds of $\delta_{\tau,j}$ are given by $\tilde{\delta} \pm \bar{b}$, where $\tilde{\delta} = \E[p(X, S, \theta_0)]$ and $\bar{b} = \E[a(X,S,\theta_0)]$, with
\begin{align*}
p(x, s, \theta) &= \sum_{t=0}^{s} \left(\lambda_t(x, \theta) + b^*_{t,T} \lambda_{T+1}(x, \theta)\right) \binom{T-t}{s-t} \frac{\exp(sx_{\tau}' \theta)}{C_s(x, \theta)}, \\
a(x,s,\theta) &= \frac{1}{2 \times 4^T}  |\lambda_{T+1}(x, \theta)| \binom{T}{s} \frac{\exp(sx_{\tau}'\theta)}{C_s(x, \theta)}, \\
C_k(x, \theta) &= \sum_{(d_1,\ldots,d_T) \in \{0,1\}^T: \sum_{t=1}^T d_t = k} \exp\left(\sum_{t=1}^T d_t x_t' \theta\right).
\end{align*}
For notational simplicity, we suppress the subscripts $\tau$ and $j$ when there is no ambiguity. Here $b^*_{t,T}$ are the Chebyshev coefficients that are fixed constants fully determined by the degree-$(T+1)$ Chebyshev polynomial rescaled to $[0,1]$, independent of any data. The functions $\lambda_t(x,\theta)$ are defined as the monomial coefficients in
\begin{align*}
\sum_{t=0}^{T+1} \lambda_t(x,\theta) u^t 
= \theta_j u(1-u) \prod_{t\neq\tau} \left( 1 + u \left\{ \exp\left(\left(x_t - x_\tau\right)'\theta\right) - 1 \right\} \right).
\end{align*}

For simplicity, we now illustrate the case with $T=2$. Then
\begin{align*}
C_0(x, \theta) &= \exp(0) = 1, \\
C_1(x, \theta) &= \exp(x_1' \theta) + \exp(x_2' \theta), \\
C_2(x, \theta) &= \exp(x_1' \theta + x_2' \theta).
\end{align*}
For $\lambda_t(x,\theta)$, w.l.o.g.\ let $\tau=1$. Due to the factor $u(1-u)$, we have $\lambda_0(x,\theta)=0$. Then, expanding the defining identity gives
\begin{align*}
  \lambda_1(x,\theta) u + \lambda_2(x,\theta) u^2 + \lambda_3(x,\theta) u^3 
  = \theta_j u(1-u) \left( 1 + u \left\{ \exp\left(\left(x_2 - x_1\right)'\theta\right) - 1 \right\} \right),
\end{align*}
which implies
\[
 \lambda_1(x,\theta)=\theta_j, \quad 
 \lambda_2(x,\theta)=\theta_j\exp\left(\left(x_2-x_1\right)'\theta\right)-2, \quad 
 \lambda_3(x,\theta)=\theta_j\left(1-\exp\left(\left(x_2-x_1\right)'\theta\right)\right).
\]

Using these explicit forms, we can simplify the expressions for $p(x,s,\theta)$ and $a(x,s,\theta)$. For $S=0$, 
\begin{align*}
p(x,0,\theta)&=  b^*_{0,2} \lambda_3(x,\theta),\quad
a(x,0,\theta) = \frac{1}{32} |\lambda_{3}(x, \theta)|.
\end{align*}
For $S=1$,
\begin{align*}
p(x,1,\theta) &= \left[\theta_j+(2b^*_{0,2}+ b^*_{1,2})\lambda_3(x,\theta) \right] \frac{\exp(x_{\tau}'\theta)}{\exp(x_1'\theta) + \exp(x_2'\theta)},\\ 
a(x,1,\theta) &= \frac{1}{16} |\lambda_{3}(x, \theta)|  \frac{\exp(x_{\tau}'\theta)}{\exp(x_1'\theta) + \exp(x_2'\theta)}.
\end{align*}
For $S=2$,
\begin{align*}
p(x,2,\theta) &= ( b^*_{0,2}+ b^*_{1,2}+ b^*_{2,2}-1)\lambda_3(x,\theta)  \frac{\exp(2x_{\tau}'\theta)}{\exp(x_1'\theta + x_2'\theta)},\\ 
a(x,2,\theta) &= \frac{1}{32} |\lambda_{3}(x, \theta)|  \frac{\exp(2x_{\tau}'\theta)}{\exp(x_1'\theta + x_2'\theta)}.
\end{align*}
By plugging in $\hat\theta$ and forming the corresponding sample analogues, one obtains the estimates $\tilde\delta$ and $\bar b$ used in the AME bounds in the main text. \citet{davezies2021identification} also show how to construct valid confidence intervals based on these bounds.

\subsubsection{With unrestricted attrition}\label{app:logit-ame-attrition}
To estimate the identified set for the AME under unrestricted attrition, we proceed in three steps.

\textbf{Step 1: identified set for $\theta$.}
Compute the identified set $\widehat\Theta_I$ for the common slope parameters $\theta$ as described above and in \Cref{alg:fe-logit-bounds}. Let $\left\{\theta^{(g)}\right\}_{g=1}^G$ be a finite grid covering $\widehat\Theta_I$.

\textbf{Step 2: identified set for the AME conditional on $\theta^{(g)}$.}
For each grid point, plug in $\theta^{(g)}$ into the Chebyshev polynomial approximation and define cost functions
\[
\left[\underline\phi(x,s,\theta^{(g)}),\, \overline\phi(x,s,\theta^{(g)})\right] = p(x,s,\theta^{(g)}) \pm a(x,s,\theta^{(g)}),
\]
where $p$ and $a$ are as in \Cref{app:logit-details}. We then solve OT problems for the attriter sample,
\begin{align*}
\underline{\delta}_{\text{att}}(\theta^{(g)}) &= \inf_{f \in \Pi(f_{1\mid\text{att}}, f_{2\mid\text{att}})} \E_f\left[\underline\phi(X,S, \theta^{(g)})\right], \\
\overline{\delta}_{\text{att}}(\theta^{(g)}) &= \sup_{f \in \Pi(f_{1\mid\text{att}}, f_{2\mid\text{att}})} \E_f\left[\overline\phi(X,S, \theta^{(g)})\right].
\end{align*}
The identified set for the AME under $\theta^{(g)}$ is $\left[\underline{\delta}(\theta^{(g)}), \overline{\delta}(\theta^{(g)})\right]$, where we combine retainers and attriters as
\begin{align*}
\underline{\delta}(\theta^{(g)}) 
&= p \cdot \E_{\text{ret}}\left[\underline\phi(X,S, \theta^{(g)})\right] + (1-p) \cdot \underline{\delta}_{\text{att}}(\theta^{(g)}), \\
\overline{\delta}(\theta^{(g)}) 
&= p \cdot \E_{\text{ret}}\left[\overline\phi(X,S, \theta^{(g)})\right] + (1-p) \cdot \overline{\delta}_{\text{att}}(\theta^{(g)}),
\end{align*}
and $\E_{\text{ret}}\left[\underline\phi(X,S, \theta^{(g)})\right]$ and $\E_{\text{ret}}\left[\overline\phi(X,S, \theta^{(g)})\right]$ can be computed directly from the observed data for retainers.

\textbf{Step 3: profiling over $\theta$.}
Finally, we take the union over the grid to obtain the identified set for the AME
\[
\bigcup_{g=1}^G \left[\underline{\delta}(\theta^{(g)}), \overline{\delta}(\theta^{(g)})\right].
\]

\end{subappendices}

\end{document}